\documentclass[conference]{IEEEtran}

\usepackage{amsmath,amssymb,amsthm}
\usepackage{cite}
\usepackage{url}

\newtheorem{theorem}{Theorem}
\newtheorem{lemma}[theorem]{Lemma}
\newtheorem{corollary}[theorem]{Corollary}
\newtheorem{proposition}[theorem]{Proposition}
\theoremstyle{remark}
\newtheorem{remark}[theorem]{Remark}

\newcommand{\OPTfree}{\operatorname{OPT}_{\mathrm{free}}}
\newcommand{\OPTcard}{\operatorname{OPT}_{\mathrm{card}}}
\newcommand{\OPTshift}{\operatorname{OPT}_{D}^{-}}
\newcommand{\OPTstandard}{\operatorname{OPT}_{D}}

\title{Dasgupta’s Hierarchical Clustering Objective: Geometry and the Price of the Cardinality Constraint}

\author{
\IEEEauthorblockN{Peiyuan Sun}
\IEEEauthorblockA{Beihang University}
}

\begin{document}

\maketitle

\begin{abstract}
The cost of a hierarchical clustering can be represented by an
ultrametric whose lowest-common-ancestor labels are cluster
cardinalities.  We relate this known representation to the shortest-path
geometry of a similarity graph.  For a connected support graph $G$, let
$d_G$ be its unit-length shortest-path metric and let the edge weights
enter only the objective.  We prove that the shifted Dasgupta optimum is
exactly the minimum edge-weighted cost of a cardinality-realizable
ultrametric that dominates $d_G$.  Connectedification lemmas put this
problem and its freely labeled dominating-ultrametric relaxation on the
same class of connected binary hierarchies, labeled respectively by
cardinality and graph diameter.  As a sharp baseline, we determine the
exact worst-case price of cardinality realizability: on every
$n$-vertex instance the ratio of the two optima is at most
$(2n-1)/3$, with equality on the unweighted complete graph; the sharp
factor for the standard unshifted objective is $2(n+1)/3$.  Our
principal structural result bounds this gap by a hereditary weighted
  fragmentation profile defined through connected balanced cuts.  Uniform
  local control gives an $O(\log n)$ gap, polynomial decay gives a constant
  gap, and the logarithmic order is tight even for unweighted trees of
  maximum degree $3$.  On locally regular bounded-degree trees, the
  hierarchy can be constructed in $O(n\log n)$ time.  An energy
  decomposition and a geometric density bound provide
supporting instance-sensitive estimates.  Thus the cardinality label has
an unavoidable linear worst case but admits substantially smaller bounds
on natural sparse graph classes.
\end{abstract}

\begin{IEEEkeywords}
hierarchical clustering, Dasgupta cost, ultrametric, graph metric,
non-contractive embedding, weighted fragmentation, balanced cut,
average distortion
\end{IEEEkeywords}

\section{Introduction}
\label{sec:introduction}

Dasgupta's objective turns similarity-based hierarchical clustering into
a precise combinatorial optimization problem: an edge $xy$ pays its
similarity weight multiplied by the size of the smallest cluster that
contains both endpoints~\cite{Dasgupta2016}.  More generally, dendrograms
admit an ultrametric representation, a viewpoint used by Carlsson and
M\'emoli to formulate axiomatic, stability, and convergence results for
hierarchical clustering~\cite{CarlssonMemoli2010}.  In Dasgupta's model,
however, the label at the lowest common ancestor of $x$ and $y$ is fixed
to be the cardinality of the corresponding cluster.  Roy and Pokutta
established an exact correspondence between hierarchies and these
restricted ultrametrics~\cite{RoyPokutta2017}.  This paper asks a
different geometric question.
If the nonzero similarities define a connected support graph with
shortest-path metric $d_G$, how costly is the requirement that the
labels of a dominating ultrametric be cluster cardinalities rather than
freely chosen geometric scales?

Moseley and Wang studied an affine-dual reward formulation of
Dasgupta's objective and used it to analyze average linkage, divisive
local search, random hierarchical clustering, and several classical
heuristics~\cite{MoseleyWang2023}.  Their work asks how particular
algorithms compare with the optimal hierarchy.  Our question is
orthogonal: we compare the optimal cardinality-labeled hierarchy with a
freely labeled dominating-ultrametric relaxation.  The precise affine
relation between the two objective normalizations is recorded in
Section~\ref{sec:related}.

The geometry used here is deliberately the unweighted shortest-path
geometry of the support graph.  Similarity magnitudes determine the
measure with which stretch is averaged, but not the underlying metric.
Accordingly, our interpretation concerns the interaction between
support-graph geometry and cardinality labels; it is not a claim that the
support metric is the unique metric representation of weighted
similarities.

The distinction is meaningful even though the objective is evaluated
only on support edges.  A freely labeled ultrametric may choose its
cluster scale from the graph diameter.  A Dasgupta hierarchy must use
$|S|-1$ for a cluster $S$.  These labels agree on paths but can differ
by a factor of order $|S|$ when many vertices lie in a small-diameter
region.  The comparison therefore isolates a structural restriction of
the Dasgupta objective, rather than the computational difficulty of
approximating that objective.
The free benchmark is a controlled relaxation: it keeps $d_G$, the
edge-weight measure, and the non-contraction requirement fixed, and
removes only the requirement that an internal label equal $|S|-1$.
The complete-graph separation below is therefore a baseline exposing the
limitation of this relaxation in dense support graphs, rather than evidence
that the support metric captures dense similarity geometry.

Our contributions are summarized by three main results.
\begin{enumerate}
    \item \emph{Geometric equivalence and common normal forms.}
    Lemma~\ref{lem:connectedification} converts every dominating
    ultrametric, without increasing its value on any support edge, into
    a connected binary hierarchy labeled by graph diameters.
    Lemma~\ref{lem:dasgupta-connectedification} gives the cardinality
    analogue for every starting hierarchy.  Consequently,
    Corollary~\ref{cor:dasgupta-geometric-equivalence} identifies the
    shifted Dasgupta optimum exactly with the minimum over
    cardinality-realizable ultrametrics that dominate $d_G$.

    \item \emph{Exact worst-case price of cardinality realizability.}
    Theorem~\ref{thm:sharp-gap} proves that every connected weighted graph
    on $n$ vertices satisfies
    \begin{equation}
    1
    \le
    \frac{\OPTcard(G,w)}{\OPTfree(G,w)}
    \le
    \frac{2n-1}{3}.
    \label{eq:introduction-gap}
    \end{equation}
    The unweighted complete graph attains equality for every $n$, and
    Corollary~\ref{cor:unshifted-sharp-gap} gives the sharp factor
    $2(n+1)/3$ for the standard, unshifted objective.

    \item \emph{Hereditary structural bounds and a tight sparse
    example.}
    Theorem~\ref{thm:hereditary-fragmentation} bounds the gap by the sum
    of a weighted fragmentation profile over cardinality scales.  A
    uniform profile gives an $O(\log n)$ gap, while polynomial scale
    decay gives a constant gap.  Locally regular bounded-degree weighted
    trees satisfy the uniform condition and admit an explicit recursive
    construction; Theorem~\ref{thm:binary-tree-log-gap} shows that the
    resulting $\Theta(\log n)$ order is tight even for unweighted trees
    of maximum degree $3$.
\end{enumerate}

Two secondary estimates make the comparison instance-sensitive.  A
cut-energy identity expresses the fixed-hierarchy ratio as the
free-energy-weighted mean of
$\bigl(|S|-1\bigr)/\operatorname{diam}_{d_G}(S)$ and yields an elementary
tail-to-first-moment estimate, analogous in form to scaling-to-average
arguments~\cite{ABN2011,ABN2015}.
A purely geometric corollary controls the gap by the maximum density of
a connected set, equivalently up to a factor of $2$ by linear ball
growth.  These results are consequences and proof tools.  The sharp
general gap is the baseline separation, while the hereditary
fragmentation theorem and its tight bounded-degree-tree example form the
principal structural contribution.

This paper does not provide a new approximation guarantee for average
linkage, single linkage, complete linkage, bisecting $k$-means, or any
other practical clustering heuristic.  Its approximation factors compare
two geometric optimization benchmarks, not an algorithm with the
Dasgupta optimum.

The paper is organized as follows.  Section~\ref{sec:related} positions
the comparison relative to prior ultrametric and hierarchical-clustering
formulations.  Section~\ref{sec:preliminaries} fixes the two optimization
problems.  Sections~\ref{sec:free-connectedification}--
\ref{sec:cardinality-equivalence} establish the common connected normal
form.  Section~\ref{sec:sharp-gap} proves the sharp gap, and
Section~\ref{sec:fragmentation} proves the hereditary bound and its tight
tree example.  Sections~\ref{sec:energy} and
\ref{sec:geometric-consequences} give supporting instance-sensitive
consequences.

\section{Related Work}
\label{sec:related}

Dasgupta introduced the objective, its behavior on disconnected graphs
and cliques, and a recursive sparsest-cut approximation
\cite{Dasgupta2016}.  Roy and Pokutta characterized the
\emph{non-trivial ultrametrics} induced by cardinality labels and used
that characterization in an $O(\log n)$ LP-based approximation
\cite{RoyPokutta2017}.  Thus the hierarchy--cardinality-ultrametric
correspondence is prior work.  Charikar and Chatziafratis and,
independently, Cohen-Addad et al. obtained
$O(\sqrt{\log n})$-type approximation guarantees through recursive
sparsest cut and convex relaxations; they also developed hardness and
axiomatic results for the objective
\cite{CharikarChatziafratis2017,CohenAddadEtAl2019}.

The metric-geometric viewpoint on hierarchical clustering predates these
objective functions.  Carlsson and M\'emoli represent dendrograms as
ultrametric spaces and use the Gromov--Hausdorff distance to establish
axiomatic characterization, stability, and convergence results for
single linkage~\cite{CarlssonMemoli2010}.  Their theory concerns the
behavior of a clustering method under perturbation and sampling.  It
does not compare objective values, impose Dasgupta's cardinality labels,
or optimize a similarity-weighted distortion.  Our use of ultrametrics
therefore shares their geometric representation but addresses a
different question about the feasible label set of an objective.

Moseley and Wang introduced the reward formulation in a preliminary
NeurIPS 2017 paper and subsequently gave an expanded analysis
\cite{MoseleyWang2023}.  Write $W:=\sum_{e\in E}w_e$ and let
$\operatorname{cost}_{D}^{-}$ denote the shifted cost formally defined
in Section~\ref{sec:preliminaries}.  In our notation their objective is
\begin{equation}
\operatorname{rew}_G(T)
:=
\sum_{xy\in E}w_{xy}\bigl(n-|C_T(x,y)|\bigr).
\label{eq:moseley-wang-reward}
\end{equation}
Consequently,
\begin{equation}
\operatorname{rew}_G(T)+\operatorname{cost}_{D}^{-}(T)
=(n-1)W.
\label{eq:reward-shift-relation}
\end{equation}
Thus reward maximization, shifted-cost minimization, and the standard
Dasgupta problem have the same optimal hierarchies.  Multiplicative
approximation ratios are not preserved by this affine transformation:
Moseley and Wang prove a $1/3$ reward guarantee for average linkage while
also constructing polynomial cost lower bounds for classical linkage
methods.  They additionally show that their random recursive hierarchy
has exact expected reward $(n-2)W/3$, which by
\eqref{eq:reward-shift-relation} is equivalent to expected shifted cost
$(2n-1)W/3$.  This is the reward-side antecedent of the numerical
identity used in our sharp-gap upper bound.  Our contribution is not a
new analysis of that random algorithm; it uses the identity to obtain an
exact separation from the free dominating-ultrametric benchmark and then
develops graph-structural conditions under which that separation is much
smaller.

Charikar, Chatziafratis, and Niazadeh subsequently proved that the
$1/3$ guarantee is worst-case tight for average linkage on the
similarity-reward objective.  They established the analogous tight
$2/3$ bound for the dissimilarity objective and gave algorithms that
strictly beat the two average-linkage baselines
\cite{CharikarChatziafratisNiazadeh2019}.  These results separate the
quality of a particular linkage heuristic from the approximability of a
fixed hierarchical objective.  Our comparison is on a different axis:
both sides are optimized, while the admissible ultrametric labels change
from free scales to cluster cardinalities.

For unweighted graphs, H\o gemo, Paul, and Telle proved NP-completeness
and developed a normalization procedure~\cite{HogemoEtAl2020}.
H\o gemo et al. subsequently proved the existence of an optimal
hierarchy with connected clusters and related the objective on trees to
edge- and vertex-partition parameters; they also obtained a
$2$-approximation for the standard Dasgupta objective on unweighted
trees~\cite{HogemoEtAl2021}.  That approximation compares an algorithmic
hierarchy directly with the optimum inside the usual cardinality-labeled
hierarchy class.  By contrast, our tree construction is a witness for
the gap between the cardinality-realizable and free
dominating-ultrametric benchmarks: it applies to locally regular weighted
bounded-degree trees and bounds shifted cost relative to $\OPTfree$.
Its $O(\log n)$ factor is therefore not claimed as an improvement over
the $2$-approximation.  Our cardinality connectedification is consistent
with their connected-optimum result, but is stated edgewise for every
starting hierarchy and for arbitrary positive similarity weights.  The
additional normal form for a freely labeled dominating ultrametric has
no counterpart in those graph-parameter results.

Arutyunova and R\"oglin study a different ``price of hierarchy'': they
compare one nested sequence with separately optimal flat clusterings at
every level for radius and diameter objectives~\cite{ArutyunovaRoglin2025}.
Their benchmark, objectives, and feasibility constraint differ from the
single Dasgupta objective considered here.  Our price is instead the loss
caused by cardinality realizability relative to a freely labeled
dominating-ultrametric relaxation on the same support graph.

Tree and ultrametric embeddings traditionally allow internal labels to
be chosen freely.  Bartal initiated probabilistic tree approximation
\cite{Bartal1996}, and Fakcharoenphol, Rao, and Talwar obtained the tight
$O(\log n)$ expected distortion guarantee for dominating tree metrics
\cite{FRT2004}.  Abraham, Bartal, and Neiman developed scaling and
average-distortion guarantees, including scaling distortion
$O(\sqrt{1/\varepsilon})$ into ultrametrics and, as a consequence,
constant average distortion~\cite{ABN2011,ABN2015}.  Cohen-Addad et al.
study the complementary computational problem of
efficiently approximating the best worst-pair-distortion ultrametric for
high-dimensional Euclidean data~\cite{CohenAddadKarthikLagarde2020}.
These results motivate our free relaxation, but their distortion
criteria concern metric pairs and permit freely chosen scales; they do
not require a label to equal the cardinality below a tree node.  Our
weights are instead similarities supported on graph edges, and the
unweighted support metric supplies the non-contraction constraint.
Metric-fitting objectives, such as that of Ailon and
Charikar~\cite{AilonCharikar2011}, are also different: they approximate
supplied dissimilarities by a tree metric, whereas our numerator is the
similarity-weighted Dasgupta objective.

Table~\ref{tab:comparison} summarizes the boundary.  We do not propose a
new approximation algorithm for the Dasgupta optimum.  We instead
compare its cardinality-realizable geometry with the freely labeled
dominating-ultrametric benchmark and identify exact worst-case and
structured-instance gaps.

\begin{table*}[t]
\caption{Relation to the closest lines of work.}
\label{tab:comparison}
\centering
\footnotesize
\renewcommand{\arraystretch}{1.08}
\begin{tabular}{@{}p{0.15\textwidth}p{0.25\textwidth}p{0.27\textwidth}p{0.25\textwidth}@{}}
\hline
Work & Feasible object & Main focus & Difference from the present comparison \\
\hline
Dasgupta; Roy--Pokutta
& Binary hierarchies and cardinality-induced ultrametrics
& Objective properties, ultrametric characterization, and approximation
& Does not compare cardinality labels with freely chosen dominating labels \\
Carlsson--M\'emoli
& Dendrograms represented as ultrametric spaces
& Axiomatic characterization, stability, and convergence of clustering methods
& No Dasgupta objective, similarity weights, or cardinality-realizability gap \\
Moseley--Wang; Charikar et al.
& Hierarchies evaluated by affine-dual reward and Dasgupta cost
& Average-linkage guarantees and tight examples; algorithms beating average linkage
& No free dominating-ultrametric benchmark or cardinality-realizability gap \\
H\o gemo et al.
& Hierarchies of unweighted graphs
& Normalization, connected optimal clusters, graph parameters, and a tree
$2$-approximation
& No free dominating-ultrametric benchmark or weighted edgewise gap \\
Bartal; FRT; ABN; Cohen-Addad et al.
& Freely labeled dominating tree metrics or ultrametrics
& Worst-case, probabilistic, scaling, average, and efficient-computation guarantees
& No cardinality-realizability constraint and a different pair measure \\
This paper
& Cardinality-realizable versus free dominating ultrametrics on one support graph
& Exact worst-case gap and hereditary fragmentation bounds
& Similarity weights enter the objective; support edges define $d_G$ \\
\hline
\end{tabular}
\end{table*}

\section{Preliminaries}
\label{sec:preliminaries}

Let $G=(V,E,w)$ be a finite connected graph with $|V|=n\ge 2$, where
$w:E\to\mathbb{R}_{>0}$ is a similarity weight.  The metric
$d_G$ is the unit-length shortest-path metric of the support graph
$(V,E)$; the values $w_e$ occur only in the objective and are not
interpreted as edge lengths.

Write
\begin{equation}
W:=\sum_{e\in E}w_e.
\label{eq:total-weight}
\end{equation}

An \emph{ultrametric} on $V$ is a metric
$u:V\times V\to\mathbb{R}_{\ge 0}$ satisfying
\begin{equation}
u(x,y)\le \max\{u(x,z),u(z,y)\}
\label{eq:ultrametric-inequality}
\end{equation}
for all $x,y,z\in V$.  It \emph{dominates} $d_G$ if
\begin{equation}
u(x,y)\ge d_G(x,y)
\qquad \text{for all }x,y\in V.
\label{eq:domination}
\end{equation}
The free dominating-ultrametric value is
\begin{equation}
\OPTfree(G,w)
:=
\min_{\substack{u\text{ ultrametric on }V\\u\ge d_G}}
\sum_{xy\in E}w_{xy}u(x,y).
\label{eq:free-optimum}
\end{equation}
The existence of a minimizer will also follow from the connected normal
form proved below.

A \emph{hierarchical tree} $T$ is a rooted tree whose leaves are the
vertices of $V$.  Every node is identified with the set of leaves in
its subtree.  For $x,y\in V$, let $C_T(x,y)$ denote the cluster
associated with their lowest common ancestor.  The hierarchy is called
\emph{connected} if every cluster $S$ induces a connected subgraph
$G[S]$.  Let $\mathcal T_{\mathrm{bin}}(V)$ be the set of all rooted
binary hierarchies with leaf set $V$.

For a binary hierarchy $T$, the standard and shifted Dasgupta objectives
are, respectively,
\begin{align}
\operatorname{cost}_{D}(T)
&:=
\sum_{xy\in E}w_{xy}|C_T(x,y)|,
\label{eq:standard-dasgupta-definition}\\
\operatorname{cost}_{D}^{-}(T)
&:=
\sum_{xy\in E}w_{xy}\bigl(|C_T(x,y)|-1\bigr).
\label{eq:shifted-dasgupta-definition}
\end{align}
Their unrestricted optimal values are denoted by $\OPTstandard(G,w)$ and
$\OPTshift(G,w)$.  Since the additive difference is independent of the
tree,
\begin{equation}
\OPTstandard(G,w)=\OPTshift(G,w)+W.
\label{eq:opt-standard-shifted}
\end{equation}

For a connected hierarchy $T$, define its diameter labeling by
\begin{equation}
u_T^{\mathrm{diam}}(x,y)
:=
\begin{cases}
0, & x=y,\\
\operatorname{diam}_{d_G}\bigl(C_T(x,y)\bigr), & x\ne y.
\end{cases}
\label{eq:diameter-labeling}
\end{equation}
Because graph diameter is monotone under set inclusion,
$u_T^{\mathrm{diam}}$ is an ultrametric.  Moreover,
\begin{equation}
d_G(x,y)
\le
\operatorname{diam}_{d_G}\bigl(C_T(x,y)\bigr),
\label{eq:diameter-dominates}
\end{equation}
so it dominates $d_G$.

\section{Connectedification of the Free Ultrametric}
\label{sec:free-connectedification}

\begin{lemma}[Connectedification of a dominating ultrametric]
\label{lem:connectedification}
Let $u$ be an ultrametric on $V$ satisfying $u\ge d_G$.
Then there exists a rooted binary hierarchical tree $T$ on $V$
such that:
\begin{enumerate}
    \item every cluster of $T$ induces a connected subgraph of $G$;
    \item $u_T^{\mathrm{diam}}$ is an ultrametric dominating $d_G$;
    \item for every edge $xy\in E$,
    \begin{equation}
    u_T^{\mathrm{diam}}(x,y)\le u(x,y).
    \label{eq:edgewise-comparison}
    \end{equation}
\end{enumerate}
Consequently,
\begin{equation}
\sum_{xy\in E}w_{xy}u_T^{\mathrm{diam}}(x,y)
\le
\sum_{xy\in E}w_{xy}u(x,y).
\label{eq:cost-comparison}
\end{equation}
\end{lemma}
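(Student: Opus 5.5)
We build $T$ recursively from the threshold structure of $u$ restricted to connected pieces of $G$. The key invariant is that every cluster $S$ we create is connected and satisfies $\operatorname{diam}_{d_G}(S)\le \max_{x,y\in S}u(x,y)$. Items 2 and then follow from the discussion after \eqref{eq:diameter-labeling}: once $T$ is connected, $u_T^{\mathrm{diam}}$ is an ultrametric dominating $d_G$. The weighted bound \eqref{eq:cost-comparison} then follows from \eqref{eq:edgewise-comparison} by multiplying by $w_{xy}>0$ and summing. So the substance lies in items 1 and 3.

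\textbf{Construction.} For a connected set $S$ with $|S|\ge 2$, let $r(S):=\max_{x,y\in S}u(x,y)$. Since $u$ is an ultrametric, the relation $u(x,y)<r(S)$ is an equivalence relation on $S$ with classes $B_1,\dots,B_k$, and $k\ge 2$. Each $B_i$ need not be connected in $G[S]$. We therefore refine: let $A_1,\dots,A_m$ be the connected components of the graphs $G[B_i]$, taken over all $i$. Every edge of $G[S]$ joining different $A_j$ must join different classes $B_i$, so its $u$-value equals $r(S)$. Contract each $A_j$ to a point; the resulting quotient graph $H$ is connected because $G[S]$ is. Split $S$ into two connected parts: take a spanning tree of $H$, remove one of its edges, and let the two sides $S_1,S_2$ be the unions of the $A_j$ on each side. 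Both sides are connected in $G$ and nonempty. Recurse on $S_1$ and $S_2$, starting from $S=V$ (connected by hypothesis) and stopping at singletons. The result is a rooted binary hierarchy all of whose clusters are connected, which is item 1.

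\textbf{Edgewise comparison.} First, $\operatorname{diam}_{d_G}(S)\le r(S)$ for every cluster $S$, because $u\ge d_G$ pointwise on $S$. Now fix an edge $xy\in E$ and let $S=C_T(x,y)$. At the node $S$ the split separates $x$ from $y$, so $x$ and $y$ lie in different parts $A_j$. They are adjacent in $G$, so they cannot lie in the same component of the same class $B_i$. Hence they lie in different classes, and $u(x,y)=r(S)$. Therefore
\[
u_T^{\mathrm{diam}}(x,y)=\operatorname{diam}_{d_G}(S)\le r(S)=u(x,y),
\]
which is \eqref{eq:edgewise-comparison}.

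\textbf{Main obstacle.} The delicate step is making each binary split respect connectivity while still forcing every separated edge to carry the top label $r(S)$. This is why the split runs through the components $A_j$ of the classes $B_i$, rather than grouping whole classes $B_i$, which could be disconnected. Cutting one edge of a spanning tree of the contracted graph $H$ guarantees both properties at once: both sides stay connected, and every edge of $G[S]$ crossing the cut joins different classes, so its $u$-value is $r(S)$.
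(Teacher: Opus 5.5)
Your proof is correct. It uses the same core mechanism as the paper: split the threshold classes of $u$ into connected components of $G$, so that any support edge separated at a cluster $S$ carries the largest $u$-value in $S$, and then $u\ge d_G$ bounds $\operatorname{diam}_{d_G}(S)$. The difference is that you organize it top-down rather than globally.

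\textbf{Paper's route.} The paper handles all thresholds $t\in\Lambda$ at once. It forms the partitions $\mathcal P_t$ into components of the classes of $u\le t$ and checks that they are nested, which gives a possibly non-binary connected tree $T_0$. It proves the edge bound via $C_{T_0}(x,y)\subseteq K\subseteq U$ with $t=u(x,y)$. Only then does it binarize each $k$-ary node by a connected prefix chain $R_2\subset\cdots\subset R_k$.

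\textbf{Your route.} You use only the local top value $r(S)$ and cut one spanning-tree edge of the quotient. Binarity and connectivity therefore come out of a single step. No nestedness argument or separate binarization is needed, and the edge inequality becomes a local check at the separating node.

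\textbf{How they relate.} The two constructions agree up to binarization. Your pieces $A_j$ at $S$ are exactly the children of $S$ in $T_0$. Whenever a side $S_1$ contains two or more pieces, it contains a crossing edge, so $r(S_1)=r(S)$ and the recursion at $S_1$ sees the same pieces again. Your tree is thus another connected binary refinement of $T_0$. The paper's version buys a canonical, choice-free $T_0$ and a template it reuses almost verbatim in Lemma~\ref{lem:dasgupta-connectedification}. Yours is shorter and self-contained.

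\textbf{Small points to fix.} The relation $u(x,y)<r(S)$ is reflexive and has $k\ge2$ classes only because $r(S)\ge1>0$ and some pair in $S$ attains $r(S)$; this deserves a sentence. Also, the phrase ``Items 2 and then follow'' is garbled; you mean that item~2 follows from the remark after \eqref{eq:diameter-labeling}.
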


\begin{IEEEproof}
Let
\begin{equation}
\Lambda
:=
\{0\}\cup\{u(x,y):x,y\in V,\ x\ne y\}
\end{equation}
be the finite set of values attained by $u$.  For each
$t\in\Lambda$, define
\begin{equation}
x\sim_t y
\quad\Longleftrightarrow\quad
u(x,y)\le t.
\label{eq:threshold-equivalence}
\end{equation}
The ultrametric inequality implies that $\sim_t$ is an equivalence
relation.  Let $\mathcal U_t$ denote its partition into equivalence
classes.

For every $U\in\mathcal U_t$, take the connected components of the
induced graph $G[U]$.  Let $\mathcal P_t$ be the partition of $V$
formed by all these components.  The family
$\{\mathcal P_t\}_{t\in\Lambda}$ is nested.  Indeed, suppose that
$s\le t$ and $A\in\mathcal P_s$.  The set $A$ is a connected
component of $G[U]$ for some $U\in\mathcal U_s$.  Since
$\mathcal U_s$ refines $\mathcal U_t$, there exists
$U'\in\mathcal U_t$ with $U\subseteq U'$.  The set $A$ is
connected in $G[U']$, and hence is contained in a unique connected
component of $G[U']$, which is a block of $\mathcal P_t$.

The partition $\mathcal P_0$ consists of singletons.  If
\begin{equation}
t_{\max}:=\max_{x,y\in V}u(x,y),
\end{equation}
then $\mathcal U_{t_{\max}}=\{V\}$.  Since $G$ is connected,
$\mathcal P_{t_{\max}}=\{V\}$.  Consequently, the distinct blocks
appearing in the nested family $\{\mathcal P_t\}_{t\in\Lambda}$,
ordered by inclusion, define a rooted hierarchical tree $T_0$.  By
construction, every cluster of $T_0$ is connected in $G$.

Label every cluster $S$ of $T_0$ by
\begin{equation}
\Delta(S):=\operatorname{diam}_{d_G}(S).
\end{equation}
If $S\subseteq S'$, then $\Delta(S)\le\Delta(S')$.  Hence the
labels are nondecreasing toward the root and the lowest-common-ancestor
rule defines an ultrametric $u_{T_0}^{\mathrm{diam}}$.  For every
pair $x,y\in V$, the cluster $C_{T_0}(x,y)$ contains both vertices,
so
\begin{equation}
d_G(x,y)
\le
\operatorname{diam}_{d_G}\bigl(C_{T_0}(x,y)\bigr)
=
u_{T_0}^{\mathrm{diam}}(x,y).
\end{equation}
Thus $u_{T_0}^{\mathrm{diam}}$ dominates $d_G$ on all pairs.

We next establish the comparison with $u$ on graph edges.  Fix an
edge $xy\in E$ and put $t=u(x,y)$.  The vertices $x$ and $y$
belong to the same class $U\in\mathcal U_t$.  Since $xy$ is itself
an edge of $G[U]$, they also belong to the same connected component
$K\in\mathcal P_t$.  Let
\begin{equation}
S:=C_{T_0}(x,y).
\end{equation}
The construction of $T_0$ gives $S\subseteq K\subseteq U$.  Hence,
for every $a,b\in S$,
\begin{equation}
d_G(a,b)\le u(a,b)\le t.
\end{equation}
Taking the maximum over $a,b\in S$ yields
\begin{equation}
u_{T_0}^{\mathrm{diam}}(x,y)
=\operatorname{diam}_{d_G}(S)
\le t
=u(x,y).
\label{eq:nonbinary-edge-comparison}
\end{equation}

It remains to make the hierarchy binary.  Let $S$ be a node of
$T_0$ with children $S_1,\ldots,S_k$, where $k>2$.  Contracting
each connected set $S_i$ in the connected graph $G[S]$ produces a
connected quotient graph.  Choose an ordering of the children such that
each partial union
\begin{equation}
R_j:=S_1\cup\cdots\cup S_j,
\qquad 2\le j\le k,
\end{equation}
is connected.  Such an ordering is obtained, for example, by rooting a
spanning tree of the quotient graph and listing each vertex after its
parent.  Replace the $k$-ary node by the binary chain
\begin{equation}
R_2\subset R_3\subset\cdots\subset R_k=S.
\end{equation}
Every new cluster is connected and contained in $S$; therefore
\begin{equation}
\operatorname{diam}_{d_G}(R_j)
\le
\operatorname{diam}_{d_G}(S).
\end{equation}
Thus the refinement cannot increase the diameter label of the lowest
common ancestor of any graph edge.  Applying this operation to all
nonbinary nodes produces a binary connected hierarchy $T$ satisfying
\eqref{eq:edgewise-comparison}.

Finally, multiply \eqref{eq:edgewise-comparison} by $w_{xy}\ge0$ and
sum over $xy\in E$ to obtain \eqref{eq:cost-comparison}.
\end{IEEEproof}

\section{A Normal Form for the Free Optimum}
\label{sec:free-normal-form}

Let $\mathcal T_{\mathrm{conn}}^{\mathrm{bin}}(G)$ denote the set of
rooted binary hierarchies on $V$ whose clusters are connected in
$G$.

\begin{corollary}[Connected diameter representation]
\label{cor:connected-diameter-representation}
The free dominating-ultrametric optimum satisfies
\begin{align}
\OPTfree(G,w)
&=
\min_{T\in\mathcal T_{\mathrm{conn}}^{\mathrm{bin}}(G)}
\nonumber\\
&\quad
\sum_{xy\in E}w_{xy}
\operatorname{diam}_{d_G}\bigl(C_T(x,y)\bigr).
\label{eq:connected-free-optimum}
\end{align}
In particular, the free optimum is attained by a connected,
diameter-normalized hierarchy.
\end{corollary}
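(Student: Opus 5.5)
The plan is to prove the two inequalities between $\OPTfree(G,w)$ and the right-hand side of \eqref{eq:connected-free-optimum} separately. Write $M$ for that right-hand side, and at first read $\OPTfree$ as an infimum. A separate attainment argument is then avoided, because the right-hand minimum is attained and both inequalities relate it to the infimum.

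The first step is the inequality $\OPTfree\le M$. The set $\mathcal T_{\mathrm{conn}}^{\mathrm{bin}}(G)$ is finite, since there are finitely many rooted binary hierarchies on a finite set $V$. It is also nonempty: Lemma~\ref{lem:connectedification} applied to any dominating ultrametric produces an element of it. One such ultrametric is $u(x,y)=\operatorname{diam}_{d_G}(V)$ for all $x\ne y$, which dominates $d_G$. Hence $M$ is a genuine minimum, attained by some $T^\ast$. For every $T\in\mathcal T_{\mathrm{conn}}^{\mathrm{bin}}(G)$, the labeling $u_T^{\mathrm{diam}}$ is an ultrametric dominating $d_G$, by the monotonicity of diameters and \eqref{eq:diameter-dominates} in the preliminaries. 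So it is feasible in \eqref{eq:free-optimum}, and its objective value is exactly $\sum_{xy\in E}w_{xy}\operatorname{diam}_{d_G}(C_T(x,y))$. Taking $T=T^\ast$ gives $\OPTfree\le M$.

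The second step is the reverse inequality. Let $u$ be any ultrametric with $u\ge d_G$. Lemma~\ref{lem:connectedification} gives some $T\in\mathcal T_{\mathrm{conn}}^{\mathrm{bin}}(G)$ with
$\sum_{xy\in E} w_{xy}u_T^{\mathrm{diam}}(x,y)\le\sum_{xy\in E} w_{xy}u(x,y)$, by \eqref{eq:cost-comparison}. The left-hand side is at least $M$. Taking the infimum over $u$ yields $M\le\OPTfree$.

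Combining the two steps, the infimum equals $M$ and is attained by the feasible ultrametric $u_{T^\ast}^{\mathrm{diam}}$. This justifies writing $\min$ in \eqref{eq:free-optimum} and gives the final sentence of the corollary. The only point needing care is this attainment of the infimum in \eqref{eq:free-optimum}, since the space of ultrametrics is not finite. It is handled by routing everything through the finite family $\mathcal T_{\mathrm{conn}}^{\mathrm{bin}}(G)$ rather than through a compactness argument. Nothing else is an obstacle.
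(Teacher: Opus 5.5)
Your proposal is correct and matches the paper's proof. The paper also argues that feasibility of $u_T^{\mathrm{diam}}$ gives one inequality, Lemma~\ref{lem:connectedification} gives the reverse, and finiteness of the hierarchy family gives attainment. Your extra steps (treating $\OPTfree$ first as an infimum, checking that $\mathcal T_{\mathrm{conn}}^{\mathrm{bin}}(G)$ is nonempty, and exhibiting $u_{T^\ast}^{\mathrm{diam}}$ as a minimizer of \eqref{eq:free-optimum}) make explicit what the paper leaves implicit.
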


\begin{IEEEproof}
For every $T\in\mathcal T_{\mathrm{conn}}^{\mathrm{bin}}(G)$, the
map $u_T^{\mathrm{diam}}$ is a feasible ultrametric in
\eqref{eq:free-optimum}.  Therefore the left-hand side of
\eqref{eq:connected-free-optimum} is at most the right-hand side.

Conversely, Lemma~\ref{lem:connectedification} transforms every
feasible ultrametric $u\ge d_G$ into a tree
$T\in\mathcal T_{\mathrm{conn}}^{\mathrm{bin}}(G)$ whose objective
value is no larger.  Hence the right-hand side is at most the left-hand
side.  The family of binary hierarchies on the finite set $V$ is
finite, so the right-hand minimum is attained.
\end{IEEEproof}

\section{Cardinality Connectedification and Equivalence}
\label{sec:cardinality-equivalence}

For a hierarchy $T$, define its shifted cardinality labeling by
\begin{equation}
h_T(x,y)
:=
\begin{cases}
0, & x=y,\\
|C_T(x,y)|-1, & x\ne y.
\end{cases}
\label{eq:cardinality-labeling}
\end{equation}
The labels are nondecreasing toward the root, so $h_T$ is an
ultrametric.  A cardinality-realizable ultrametric is any map $h_H$
arising from a hierarchy $H\in\mathcal T_{\mathrm{bin}}(V)$ through
\eqref{eq:cardinality-labeling}.  We define its natural dominating
optimum by
\begin{equation}
\OPTcard(G,w)
:=
\min_{\substack{H\in\mathcal T_{\mathrm{bin}}(V)\\h_H\ge d_G}}
\sum_{xy\in E}w_{xy}h_H(x,y).
\label{eq:cardinality-optimum}
\end{equation}
Here $h_H\ge d_G$ is required on all pairs.  The feasible family is
nonempty: every connected hierarchy has $h_H\ge d_G$, as proved below.
The next lemma shows that connectedness is a normal form rather than an
extra restriction on \eqref{eq:cardinality-optimum}.

\begin{lemma}[Connectedification of a Dasgupta hierarchy]
\label{lem:dasgupta-connectedification}
Let $H$ be an arbitrary rooted binary hierarchy on $V$.  There exists a
rooted binary hierarchy $T\in\mathcal T_{\mathrm{conn}}^{\mathrm{bin}}(G)$
such that, for every edge $xy\in E$,
\begin{equation}
|C_T(x,y)|\le |C_H(x,y)|.
\label{eq:cardinality-edgewise}
\end{equation}
Consequently,
\begin{equation}
\operatorname{cost}_{D}^{-}(T)
\le
\operatorname{cost}_{D}^{-}(H)
\quad\text{and}\quad
\operatorname{cost}_{D}(T)
\le
\operatorname{cost}_{D}(H).
\label{eq:dasgupta-connected-cost}
\end{equation}
\end{lemma}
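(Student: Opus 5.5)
We mirror the proof of Lemma~\ref{lem:connectedification}, replacing the threshold partitions of $u$ by the nested cluster structure of $H$ itself. For each cluster $A$ of $H$, let $\mathcal K(A)$ be the family of connected components of the induced graph $G[A]$. The key observation is that the family $\mathcal P:=\bigcup_{A\in H}\mathcal K(A)$ is laminar. Let $K\in\mathcal K(A)$ and $K'\in\mathcal K(A')$ intersect. Since $H$ is a hierarchy, we may assume $A\subseteq A'$. Then $K$ is connected inside $G[A']$, so $K$ lies in the unique component of $G[A']$ that meets it, namely $K'$. Every singleton is a component of the leaf cluster $\{x\}$. The root $V$ has $\mathcal K(V)=\{V\}$ because $G$ is connected. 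Hence the distinct sets of $\mathcal P$, ordered by inclusion, form a rooted hierarchical tree $T_0$ on $V$, and all of its clusters are connected.

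Next we verify the edgewise inequality for $T_0$. Fix an edge $xy\in E$ and set $A:=C_H(x,y)$. Since $x,y\in A$ and $xy$ is an edge of $G[A]$, both endpoints lie in a single component $K\in\mathcal K(A)\subseteq\mathcal P$. Thus $K$ is a common ancestor of $x$ and $y$ in $T_0$, so $C_{T_0}(x,y)\subseteq K\subseteq A$. This gives $|C_{T_0}(x,y)|\le |C_H(x,y)|$.

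Then we binarize exactly as in the last step of Lemma~\ref{lem:connectedification}. A node $S$ of $T_0$ with $k>2$ children is replaced by a chain $R_2\subset\cdots\subset R_k=S$ of connected partial unions, ordered along a spanning tree of the quotient graph of $G[S]$. For any pair whose lowest common ancestor was $S$, the new lowest common ancestor is some $R_j\subseteq S$. The relevant cluster therefore only shrinks, and its cardinality cannot increase, which preserves the inequality \eqref{eq:cardinality-edgewise}. Every new cluster is connected, so the result is $T\in\mathcal T_{\mathrm{conn}}^{\mathrm{bin}}(G)$. Finally, multiplying \eqref{eq:cardinality-edgewise} by $w_{xy}>0$, subtracting $1$ where needed, and summing over $E$ yields both inequalities in \eqref{eq:dasgupta-connected-cost}.

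The main point requiring care is the laminarity and tree structure of $\mathcal P$. A single set may appear as a component for several clusters $A$, and we must count it only once. We must also check that every non-root set of $\mathcal P$ has a unique minimal proper superset in $\mathcal P$; laminarity together with the fact that $V\in\mathcal P$ ensures this. Everything else follows the same pattern as the free connectedification argument.
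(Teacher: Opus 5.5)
Your proposal is correct and follows essentially the same route as the paper's proof: the laminar family of connected components of $G[A]$ over the clusters $A$ of $H$, the edgewise containment $C_{T_0}(x,y)\subseteq K\subseteq C_H(x,y)$, and binarization by connected prefix unions ordered along a spanning tree of the quotient graph. Your laminarity check and your remark on the tree structure of $\mathcal P$ are slightly more explicit than the paper's, but the argument is otherwise the same.
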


\begin{IEEEproof}
Let $\mathcal L(H)$ be the laminar family of leaf sets represented by
the nodes of $H$.  Form a new collection
\begin{equation}
\mathcal C
:=
\left\{
K:
K\text{ is a connected component of }G[A]
\text{ for some }A\in\mathcal L(H)
\right\}.
\label{eq:component-laminar-family}
\end{equation}
The collection $\mathcal C$ is laminar.  Indeed, clusters in
$\mathcal L(H)$ are either disjoint or nested.  If $A\subseteq B$, then
every connected component of $G[A]$ is contained in a unique connected
component of $G[B]$.  Since $\mathcal C$ contains all singletons and,
because $G$ is connected, also contains $V$, its distinct members form
a rooted hierarchy $T_0$ under inclusion.  Every cluster of $T_0$ is
connected.

Fix an edge $xy\in E$ and let $A=C_H(x,y)$.  The edge $xy$ lies in
$G[A]$, so $x$ and $y$ belong to the same connected component $K$ of
$G[A]$.  The smallest member of $\mathcal C$ containing both endpoints
is contained in $K$, and hence
\begin{equation}
|C_{T_0}(x,y)|
\le |K|
\le |A|
=|C_H(x,y)|.
\label{eq:nonbinary-cardinality-comparison}
\end{equation}

The hierarchy $T_0$ need not be binary.  Suppose that a cluster $S$ has
maximal proper children $S_1,\ldots,S_k$.  These children partition $S$.
After contracting each $S_i$, the graph $G[S]$ induces a connected
quotient graph.  A parent-before-child traversal of a spanning tree of
this quotient orders the children so that every prefix union
\begin{equation}
R_j:=S_1\cup\cdots\cup S_j
\qquad (2\le j\le k)
\label{eq:cardinality-prefix-unions}
\end{equation}
is connected.  Replacing the $k$-ary node by the binary chain
$R_2\subset\cdots\subset R_k=S$ creates only connected clusters.
For endpoints originally separated among different children of $S$,
the new lowest common ancestor is a subset of $S$ and therefore has no
larger cardinality.  Applying this refinement at every nonbinary node
produces the required hierarchy $T$.  Inequality
\eqref{eq:dasgupta-connected-cost} follows from
\eqref{eq:cardinality-edgewise} and the nonnegativity of the weights.
\end{IEEEproof}

\begin{corollary}[Geometric form of the Dasgupta optimum]
\label{cor:dasgupta-geometric-equivalence}
The shifted Dasgupta optimum equals the natural cardinality-realizable
dominating-ultrametric optimum and admits the connected normal form
\begin{align}
\OPTshift(G,w)
&=\OPTcard(G,w)
\nonumber\\
&=
\min_{T\in\mathcal T_{\mathrm{conn}}^{\mathrm{bin}}(G)}
\sum_{xy\in E}w_{xy}h_T(x,y).
\label{eq:dasgupta-cardinality-equivalence}
\end{align}
Moreover, for every $T\in\mathcal T_{\mathrm{conn}}^{\mathrm{bin}}(G)$,
\begin{equation}
d_G(x,y)
\le
\operatorname{diam}_{d_G}\bigl(C_T(x,y)\bigr)
\le
|C_T(x,y)|-1
=h_T(x,y)
\label{eq:diameter-cardinality-chain}
\end{equation}
for all $x,y\in V$.  Thus $h_T$ is non-contractive on all pairs, not
only on the support edges.
\end{corollary}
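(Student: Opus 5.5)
We prove the chain \eqref{eq:diameter-cardinality-chain} first, since both equalities in \eqref{eq:dasgupta-cardinality-equivalence} rely on it. Fix $T\in\mathcal T_{\mathrm{conn}}^{\mathrm{bin}}(G)$ and $x\ne y$, and put $S=C_T(x,y)$. The first inequality $d_G(x,y)\le\operatorname{diam}_{d_G}(S)$ holds because $x,y\in S$. For the second, note that $G[S]$ is connected. Any two vertices $a,b\in S$ are therefore joined by a path inside $G[S]$, and a simple such path has at most $|S|-1$ edges. Hence $d_G(a,b)\le d_{G[S]}(a,b)\le |S|-1$. Taking the maximum over $a,b\in S$ gives $\operatorname{diam}_{d_G}(S)\le |S|-1=h_T(x,y)$. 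The pairs $x=y$ are trivial, so $h_T\ge d_G$ on all pairs. This also justifies the earlier claim that the feasible family in \eqref{eq:cardinality-optimum} is nonempty.

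Next we compare the three quantities. Let $M$ denote the minimum over $\mathcal T_{\mathrm{conn}}^{\mathrm{bin}}(G)$ on the right of \eqref{eq:dasgupta-cardinality-equivalence}. For every binary $H$, the shifted cost is $\operatorname{cost}_{D}^{-}(H)=\sum_{xy\in E}w_{xy}h_H(x,y)$, since edges have distinct endpoints. The three quantities are therefore minima of the same functional over nested families: $\OPTshift$ ranges over all of $\mathcal T_{\mathrm{bin}}(V)$, $\OPTcard$ over the subfamily with $h_H\ge d_G$, and $M$ over $\mathcal T_{\mathrm{conn}}^{\mathrm{bin}}(G)$. By the first step, $\mathcal T_{\mathrm{conn}}^{\mathrm{bin}}(G)$ is contained in the feasible set of $\OPTcard$. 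Enlarging the feasible set can only lower the minimum, so
\[
\OPTshift(G,w)\le\OPTcard(G,w)\le M.
\]

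To close the chain, take any $H\in\mathcal T_{\mathrm{bin}}(V)$ and apply Lemma~\ref{lem:dasgupta-connectedification}. This produces $T\in\mathcal T_{\mathrm{conn}}^{\mathrm{bin}}(G)$ with $\operatorname{cost}_{D}^{-}(T)\le\operatorname{cost}_{D}^{-}(H)$. Taking $H$ to be a minimizer of $\OPTshift$ gives $M\le\OPTshift(G,w)$, so all three values coincide. All minima are attained because the hierarchy families are finite.

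There is no real obstacle here. The only points needing care are two. First, the objective ignores non-edge pairs, so the domination constraint on all pairs must come from connectivity alone, which the path-length bound supplies. Second, the unrestricted Dasgupta minimization carries no domination constraint, so the inequality $\OPTshift\le\OPTcard$ is simply the statement that a minimum over a larger family is no bigger.
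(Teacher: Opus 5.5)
Your proposal is correct and follows essentially the same route as the paper. You prove the diameter chain by the simple-path bound inside the connected set $G[S]$, get $\OPTshift\le\OPTcard\le$ the connected minimum from the nested feasible families, and close the chain with Lemma~\ref{lem:dasgupta-connectedification}, which is exactly the paper's $A\le B\le C\le A$ argument.
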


\begin{IEEEproof}
Let $A$ be the minimum shifted cost over all binary hierarchies, let $B$
be the minimum in \eqref{eq:cardinality-optimum}, and let $C$ be the
minimum shifted cost over connected binary hierarchies.  The feasible
family defining $B$ is a subfamily of all binary hierarchies, so
$A\le B$.  If $G[S]$ is connected, then any two vertices of $S$ can be
joined inside $S$ by a simple path with at most $|S|-1$ edges.  Taking
$S=C_T(x,y)$ gives \eqref{eq:diameter-cardinality-chain}; hence every
connected binary hierarchy is feasible for $B$, and $B\le C$.
Lemma~\ref{lem:dasgupta-connectedification} transforms every binary
hierarchy into a connected binary hierarchy of no larger shifted cost,
so $C\le A$.  Thus $A=B=C$, proving
\eqref{eq:dasgupta-cardinality-equivalence} and the domination claim.
\end{IEEEproof}

\begin{remark}[Weighted average stretch]
\label{rem:average-stretch}
Because $d_G(x,y)=1$ for every $xy\in E$, the quantities
\begin{equation}
\frac{1}{W}\sum_{xy\in E}w_{xy}u(x,y)
\quad\text{and}\quad
\frac{1}{W}\operatorname{cost}_{D}^{-}(T)
\label{eq:weighted-average-stretch}
\end{equation}
are weighted average stretches on the support edges.  Corollary
\ref{cor:dasgupta-geometric-equivalence} therefore identifies the
shifted Dasgupta problem with minimum weighted average stretch into the
restricted class of cardinality-realizable dominating ultrametrics.
\end{remark}

\section{The Exact Cardinality-Realizability Gap}
\label{sec:sharp-gap}

We first record a tree identity that is responsible for the sharp
constant.  It is the shifted form of the familiar invariance of
Dasgupta's cost on the unweighted clique~\cite{Dasgupta2016}; the proof
is included because the exact expectation is used below.

\begin{lemma}[Pair-cardinality identity]
\label{lem:pair-cardinality-identity}
Let $\tau$ be any rooted binary tree with $n$ leaves.  Then
\begin{equation}
Q(\tau)
:=
\sum_{\{i,j\}}
\bigl(|C_\tau(i,j)|-1\bigr)
=
\frac{n(n-1)(2n-1)}{6},
\label{eq:pair-cardinality-identity}
\end{equation}
where the sum is over all unordered pairs of distinct leaves.  In
particular, $Q(\tau)$ depends only on $n$ and not on the shape of
$\tau$.
\end{lemma}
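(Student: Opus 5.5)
We argue by strong induction on $n$, decomposing at the root of $\tau$. For $n=1$ there are no pairs, and for $n=2$ the single pair has $|C_\tau|-1=1$; both values agree with $n(n-1)(2n-1)/6$. For $n\ge 2$, let the root of $\tau$ have children with leaf sets of sizes $a$ and $b$, where $a+b=n$ and $a,b\ge 1$, and let $\tau_1,\tau_2$ be the subtrees. Every unordered pair of distinct leaves falls into exactly one of two classes. Either both leaves lie in the same child subtree $\tau_i$, in which case their lowest common ancestor in $\tau$ is their lowest common ancestor in $\tau_i$ with the same leaf set. Or the leaves lie in different children, in which case the lowest common ancestor is the root, with cardinality $n$. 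This gives the recursion
\begin{equation*}
Q(\tau)=Q(\tau_1)+Q(\tau_2)+ab\,(n-1).
\end{equation*}

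By the induction hypothesis, $Q(\tau_1)=f(a)$ and $Q(\tau_2)=f(b)$, where $f(m):=m(m-1)(2m-1)/6=\sum_{k=1}^{m-1}k^2$. It remains to verify the polynomial identity
\begin{equation*}
f(a)+f(b)+ab(a+b-1)=f(a+b)
\end{equation*}
for all positive integers $a,b$. The cleanest route is to write $f(m)=\tfrac{m^3}{3}-\tfrac{m^2}{2}+\tfrac{m}{6}$ and compare the three parts separately:
\begin{itemize}
\item the cubic part gives $(a+b)^3-a^3-b^3=3ab(a+b)$, contributing $ab(a+b)$;
\item the quadratic part gives $(a+b)^2-a^2-b^2=2ab$, contributing $-ab$;
\item the linear part is additive and contributes nothing.
\end{itemize}
Hence $f(a+b)-f(a)-f(b)=ab(a+b)-ab=ab(n-1)$, which closes the induction.

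There is no real obstacle; the only point needing care is the bookkeeping in the recursion. Pairs within a child keep the same cluster cardinality, and every cross pair pays exactly $n-1$ in the shifted labeling. An alternative check avoids algebra entirely. Summing $\binom{|S|}{2}$-type contributions node by node, each internal node $S$ with children of sizes $a,b$ contributes $ab(|S|-1)$, and the identity $f(a+b)-f(a)-f(b)=ab(a+b-1)$ shows these contributions telescope to $f(n)$. Since the recursion and base case determine $Q$ independently of how $n$ is split at each node, $Q(\tau)$ depends only on $n$.
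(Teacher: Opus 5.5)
Your proof is correct and matches the paper's: induction on the root split using the recursion $Q(\tau)=Q(\tau_1)+Q(\tau_2)+ab(n-1)$. Where the paper only says the identity $f(a)+f(b)+ab(a+b-1)=f(a+b)$ follows by direct expansion, you carry out that expansion explicitly.
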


\begin{IEEEproof}
The claim is immediate for $n=1$.  Suppose that the two subtrees below
the root contain $a$ and $b$ leaves, where $a+b=n$.  Each of the $ab$
crossing pairs has the root cluster as its lowest common ancestor and
therefore contributes $n-1$.  By induction,
\begin{align}
Q(\tau)
&=
\frac{a(a-1)(2a-1)}{6}
+
\frac{b(b-1)(2b-1)}{6}
\nonumber\\
&\quad
+ab(n-1)
\nonumber\\
&=
\frac{n(n-1)(2n-1)}{6}.
\label{eq:pair-cardinality-induction}
\end{align}
The last equality is a direct expansion using $n=a+b$.
\end{IEEEproof}

\begin{theorem}[Sharp worst-case gap]
\label{thm:sharp-gap}
For every connected weighted graph $G=(V,E,w)$ with $|V|=n\ge2$,
\begin{equation}
1
\le
\frac{\OPTcard(G,w)}{\OPTfree(G,w)}
\le
\frac{2n-1}{3}.
\label{eq:sharp-gap}
\end{equation}
The upper factor is best possible for every $n$: equality holds for the
unweighted complete graph $K_n$.
\end{theorem}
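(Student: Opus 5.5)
The plan is to squeeze both optima between multiples of $W=\sum_e w_e$, using the normal forms already established. The lower bound comes from the diameter--cardinality chain, the upper bound from averaging Lemma~\ref{lem:pair-cardinality-identity} over random relabelings, and sharpness from a direct computation on $K_n$.

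\emph{Lower bound.} By Corollary~\ref{cor:dasgupta-geometric-equivalence}, $\OPTcard(G,w)$ is attained by some $T\in\mathcal T_{\mathrm{conn}}^{\mathrm{bin}}(G)$ with value $\sum_{xy\in E} w_{xy}h_T(x,y)$. By the chain \eqref{eq:diameter-cardinality-chain}, $h_T(x,y)\ge \operatorname{diam}_{d_G}(C_T(x,y))=u_T^{\mathrm{diam}}(x,y)$ on every edge. Since $u_T^{\mathrm{diam}}$ is feasible in \eqref{eq:free-optimum}, this gives $\OPTcard\ge\OPTfree$.

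\emph{Upper bound.} First, $\OPTfree(G,w)\ge W$, because every feasible $u$ satisfies $u(x,y)\ge d_G(x,y)=1$ on each edge. Second, I would show $\OPTcard(G,w)\le (2n-1)W/3$.
\begin{itemize}
\item By Corollary~\ref{cor:dasgupta-geometric-equivalence}, $\OPTcard=\OPTshift$. So it suffices to bound the shifted cost of some arbitrary binary hierarchy; the domination constraint $h_H\ge d_G$ can be ignored.
\item Fix any rooted binary tree shape $\tau$ with $n$ leaves. Assign its leaves to $V$ by a uniformly random bijection $\pi$, producing a hierarchy $H_\pi$.
\item For a fixed edge $xy$, the pair $\{\pi^{-1}(x),\pi^{-1}(y)\}$ is a uniformly random unordered pair of distinct leaves of $\tau$.
\item Lemma~\ref{lem:pair-cardinality-identity} then gives
\[
\mathbb E\bigl[|C_{H_\pi}(x,y)|-1\bigr]=\frac{Q(\tau)}{\binom n2}=\frac{2n-1}{3}.
\]
\item By linearity, $\mathbb E[\operatorname{cost}_D^-(H_\pi)]=(2n-1)W/3$, so some $\pi$ achieves at most this value.
\end{itemize}
Combining the two estimates yields the ratio bound $(2n-1)/3$.

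\emph{Sharpness.} Take the unweighted $K_n$, where $W=\binom n2$.
\begin{itemize}
\item The constant ultrametric $u\equiv1$ on distinct pairs dominates $d_{K_n}$, so $\OPTfree=W$ exactly.
\item Every binary hierarchy has shifted cost $Q(\tau)=n(n-1)(2n-1)/6$ by Lemma~\ref{lem:pair-cardinality-identity}.
\item Hence $\OPTcard=\OPTshift=(2n-1)W/3$, and equality holds.
\end{itemize}

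The main subtlety I expect is making sure the averaging argument is legitimate for $\OPTcard$, whose definition imposes $h_H\ge d_G$, which a random relabeling may violate. This is exactly what the equality $\OPTcard=\OPTshift$ from Corollary~\ref{cor:dasgupta-geometric-equivalence} resolves. The rest is bookkeeping.
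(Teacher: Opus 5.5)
Your proposal is correct and follows the paper's proof essentially step for step: the random-relabeling average of Lemma~\ref{lem:pair-cardinality-identity}, the bound $\OPTfree\ge W$, and the $K_n$ computation all match, and citing $\OPTcard=\OPTshift$ from Corollary~\ref{cor:dasgupta-geometric-equivalence} is equivalent to the paper's direct appeal to Lemma~\ref{lem:dasgupta-connectedification}. Your lower bound, which passes through a connected optimum and its diameter labeling, is valid but takes a detour; the paper simply notes that any feasible $h_H$ is itself an ultrametric dominating $d_G$ and is therefore feasible for $\OPTfree$.
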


\begin{IEEEproof}
Every cardinality-realizable dominating ultrametric is feasible for the
free problem.  Hence $\OPTfree(G,w)\le\OPTcard(G,w)$, proving the first
inequality.

For the upper bound, fix any rooted binary tree shape $\tau$ with $n$
leaves and choose a uniformly random bijection $\pi:V\to L(\tau)$.
For a fixed edge $xy\in E$, the unordered pair
$\{\pi(x),\pi(y)\}$ is uniformly distributed over all unordered pairs of
leaves.  Lemma~\ref{lem:pair-cardinality-identity} gives
\begin{equation}
\mathbb E_\pi
\left[
|C_{\tau,\pi}(x,y)|-1
\right]
=
\frac{Q(\tau)}{\binom{n}{2}}
=
\frac{2n-1}{3}.
\label{eq:expected-cardinality-label}
\end{equation}
By linearity of expectation,
\begin{equation}
\mathbb E_\pi
\left[
\operatorname{cost}_{D}^{-}(\tau,\pi)
\right]
=
\frac{2n-1}{3}W.
\label{eq:expected-dasgupta-cost}
\end{equation}
By \eqref{eq:reward-shift-relation}, this expectation is also the
cost-side form of Moseley and Wang's exact random-hierarchy reward
calculation~\cite{MoseleyWang2023}.  The fixed-shape random-labeling
argument above supplies the particular formulation needed here.
There is therefore a deterministic labeling with cost at most the
right-hand side of \eqref{eq:expected-dasgupta-cost}.  Applying
Lemma~\ref{lem:dasgupta-connectedification} to that hierarchy yields a
connected binary hierarchy with no larger cost.  Consequently,
\begin{equation}
\OPTcard(G,w)
\le
\frac{2n-1}{3}W.
\label{eq:cardinality-versus-total-weight}
\end{equation}

For every feasible free ultrametric $u$ and every edge $xy\in E$,
domination gives $u(x,y)\ge d_G(x,y)=1$.  Hence
\begin{equation}
\sum_{xy\in E}w_{xy}u(x,y)\ge W,
\qquad
\OPTfree(G,w)\ge W.
\label{eq:free-lower-total-weight}
\end{equation}
Combining \eqref{eq:cardinality-versus-total-weight} and
\eqref{eq:free-lower-total-weight} proves the upper inequality in
\eqref{eq:sharp-gap}.

It remains to prove sharpness.  On the unweighted complete graph, the
constant ultrametric $u(x,y)=1$ for $x\ne y$ dominates $d_{K_n}$.
Therefore
\begin{equation}
\OPTfree(K_n)=\binom{n}{2}.
\label{eq:complete-free}
\end{equation}
Every cluster in $K_n$ is connected, and
Lemma~\ref{lem:pair-cardinality-identity} shows that every binary
hierarchy has shifted cost
\begin{equation}
\frac{n(n-1)(2n-1)}{6}
=
\frac{2n-1}{3}\binom{n}{2}.
\label{eq:complete-cardinality}
\end{equation}
Thus equality holds in \eqref{eq:sharp-gap}.
\end{IEEEproof}

\begin{corollary}[Sharp factor for the standard objective]
\label{cor:unshifted-sharp-gap}
For the standard Dasgupta objective,
\begin{equation}
\frac{\OPTstandard(G,w)}{\OPTfree(G,w)}
\le
\frac{2(n+1)}{3}.
\label{eq:unshifted-sharp-gap}
\end{equation}
The factor is best possible and is attained by the unweighted complete
graph.
\end{corollary}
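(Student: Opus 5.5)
The plan is to combine the shifted bound \eqref{eq:cardinality-versus-total-weight} with the additive relation \eqref{eq:opt-standard-shifted}, and then use the lower bound $\OPTfree(G,w)\ge W$ from \eqref{eq:free-lower-total-weight}.

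For the upper bound, \eqref{eq:opt-standard-shifted} and \eqref{eq:cardinality-versus-total-weight} give
\begin{equation*}
\OPTstandard(G,w)=\OPTshift(G,w)+W\le \frac{2n-1}{3}W+W=\frac{2(n+1)}{3}W.
\end{equation*}
Here Corollary~\ref{cor:dasgupta-geometric-equivalence} identifies $\OPTshift$ with $\OPTcard$. Dividing by $\OPTfree(G,w)\ge W$ yields \eqref{eq:unshifted-sharp-gap}. Note that we cannot simply add $1$ to the ratio in Theorem~\ref{thm:sharp-gap}: the shift $W$ is divided by $\OPTfree$, which may exceed $W$. The bound is nevertheless valid, because both the numerator estimate and the denominator estimate are expressed in terms of $W$.

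For sharpness, take the unweighted $K_n$. By \eqref{eq:complete-free}, $\OPTfree(K_n)=W=\binom n2$. By \eqref{eq:complete-cardinality}, every binary hierarchy has shifted cost $\frac{2n-1}{3}\binom n2$, so its standard cost is $\frac{2(n+1)}{3}\binom n2$. The ratio is therefore exactly $2(n+1)/3$.

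There is no real obstacle in this argument. The only point requiring care is routing the shift through $W$ rather than through the ratio, as noted above.
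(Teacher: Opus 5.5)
Your proof is correct and matches the paper's: you combine $\OPTstandard=\OPTshift+W$ with $\OPTcard\le\frac{2n-1}{3}W$, divide by $\OPTfree\ge W$, and check equality on $K_n$ via \eqref{eq:complete-free} and \eqref{eq:complete-cardinality}. One small correction to your aside: adding $1$ to the ratio bound of Theorem~\ref{thm:sharp-gap} does in fact work. By \eqref{eq:free-lower-total-weight}, $W/\OPTfree(G,w)\le 1$, so $\OPTfree$ exceeding $W$ only makes that term smaller.
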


\begin{IEEEproof}
Equations~\eqref{eq:opt-standard-shifted} and
\eqref{eq:cardinality-versus-total-weight} imply
\begin{equation}
\OPTstandard(G,w)
\le
\left(\frac{2n-1}{3}+1\right)W
=
\frac{2(n+1)}{3}W.
\end{equation}
Now apply \eqref{eq:free-lower-total-weight}.  Equality for $K_n$
follows from \eqref{eq:complete-free},
\eqref{eq:complete-cardinality}, and
$\OPTstandard(K_n)=\OPTshift(K_n)+\binom{n}{2}$.
\end{IEEEproof}

\section{Hereditary Weighted Fragmentation}
\label{sec:fragmentation}

The sharp gap theorem is unconditional, but its linear factor is driven
by connected regions that cannot be divided by a light, balanced cut.
We formalize the opposite regime using a profile defined directly on the
input graph.  For every connected set $S\subseteq V$, let
\begin{equation}
W(S)
:=
\sum_{xy\in E(G[S])}w_{xy}.
\label{eq:internal-weight}
\end{equation}
For disjoint sets $A,B\subseteq V$, write
\begin{equation}
w(A,B)
:=
\sum_{\substack{xy\in E\\x\in A,\ y\in B}}w_{xy}.
\label{eq:cut-weight}
\end{equation}

Fix $\alpha\in(0,1/2]$.  A bipartition
$S=A\mathbin{\dot\cup}B$ is called a connected
$\alpha$-balanced fragmentation of $S$ if $G[A]$ and $G[B]$ are
connected and
\begin{equation}
\max\{|A|-1,|B|-1\}
\le
(1-\alpha)(|S|-1).
\label{eq:shifted-balanced-fragmentation}
\end{equation}
The shifted cardinalities in \eqref{eq:shifted-balanced-fragmentation}
match the shifted Dasgupta objective.  For a connected set $S$ with
$|S|\ge2$, define
\begin{align}
\operatorname{frag}_{\alpha}(S)
&:=
\min_{\substack{S=A\mathbin{\dot\cup}B\\
G[A],G[B]\text{ connected}\\
\max\{|A|-1,|B|-1\}\le(1-\alpha)(|S|-1)}}
\frac{(|S|-1)w(A,B)}{W(S)}.
\label{eq:local-fragmentation-number}
\end{align}
If no admissible fragmentation exists, the minimum in
\eqref{eq:local-fragmentation-number} is defined to be $+\infty$.
The denominator is positive because $G[S]$ is connected and all support
weights are positive.

For a binary hierarchy $T$, let $\mathcal I(T)$ denote its family of
internal clusters.

Put
\begin{equation}
N:=n-1,
\qquad
q:=1-\alpha,
\qquad
J:=\left\lfloor\log_{1/q}N\right\rfloor.
\label{eq:fragmentation-scale-parameters}
\end{equation}
For $j\in\{0,\ldots,J\}$, define the hereditary fragmentation profile
\begin{align}
\kappa_j(G,w;\alpha)
&:=
\sup_{\substack{S\subseteq V,\ G[S]\text{ connected}\\
q^{j+1}N<|S|-1\le q^jN}}
\operatorname{frag}_{\alpha}(S).
\label{eq:hereditary-fragmentation-profile}
\end{align}
The supremum over an empty family is $0$.  Unlike the hierarchy-dependent
energy condition in Section~\ref{sec:energy}, this profile does not refer
to a hierarchy or to an optimal solution.

\begin{theorem}[Hereditary weighted fragmentation]
\label{thm:hereditary-fragmentation}
If $\kappa_j(G,w;\alpha)<\infty$ for every
$j\in\{0,\ldots,J\}$, then there exists a connected binary hierarchy
$T$ such that
\begin{equation}
\operatorname{cost}_{D}^{-}(T)
\le
W\sum_{j=0}^{J}\kappa_j(G,w;\alpha).
\label{eq:fragmentation-tree-cost}
\end{equation}
Consequently,
\begin{equation}
\frac{\OPTcard(G,w)}{\OPTfree(G,w)}
\le
\sum_{j=0}^{J}\kappa_j(G,w;\alpha).
\label{eq:fragmentation-gap-bound}
\end{equation}
\end{theorem}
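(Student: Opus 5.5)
The plan is to build $T$ top-down by recursive connected balanced splitting, and then charge the shifted cost scale by scale. Start with $S=V$, which is connected. At every connected cluster $S$ with $|S|\ge2$, let $j$ be the scale index with $q^{j+1}N<|S|-1\le q^jN$. Because $|S|-1\ge1$ and $q^{J+1}N<1$, this index lies in $\{0,\ldots,J\}$. The hypothesis $\kappa_j<\infty$ then guarantees an admissible connected $\alpha$-balanced fragmentation $S=A\mathbin{\dot\cup}B$, and we choose one attaining the minimum in \eqref{eq:local-fragmentation-number}, so that
\begin{equation*}
(|S|-1)\,w(A,B)\le \operatorname{frag}_\alpha(S)\,W(S)\le\kappa_j\,W(S).
\end{equation*}
We recurse on $A$ and $B$, which are connected, and stop at singletons. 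This gives a connected binary hierarchy $T$.

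Every edge $xy$ is cut exactly once, at $C_T(x,y)$, where it pays $w_{xy}(|C_T(x,y)|-1)$. Therefore
\begin{equation*}
\operatorname{cost}_D^-(T)=\sum_{S\in\mathcal I(T)}(|S|-1)\,w(A_S,B_S)\le\sum_{S\in\mathcal I(T)}\kappa_{j(S)}\,W(S).
\end{equation*}
We group the internal clusters by scale index $j$. The key claim is that for a fixed $j$, the clusters with index $j$ are pairwise disjoint. Granting this, their internal edge sets are disjoint, so $\sum_{S:j(S)=j}W(S)\le W$. Summing over $j$ then yields \eqref{eq:fragmentation-tree-cost}.

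The disjointness claim is the step needing care. Two clusters of $T$ are either disjoint or nested, so it suffices to show that a strict descendant $S'$ of $S$ has strictly smaller index. If $S'\subseteq A$ is a child-side descendant, balance gives
\begin{equation*}
|S'|-1\le|A|-1\le q(|S|-1)\le q\cdot q^jN=q^{j+1}N.
\end{equation*}
Hence $S'$ falls outside the window $(q^{j+1}N,q^jN]$ and has index at least $j+1$. This is exactly why the window is defined with shifted cardinalities and a half-open interval, matching the balance condition \eqref{eq:shifted-balanced-fragmentation}. It follows that each root-to-leaf chain meets each scale at most once, and the scale classes are antichains.

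Finally, \eqref{eq:fragmentation-gap-bound} follows by combining three facts. First, $\OPTcard(G,w)\le\operatorname{cost}_D^-(T)$, because $T$ is connected and hence feasible by Corollary~\ref{cor:dasgupta-geometric-equivalence}. Second, the bound \eqref{eq:fragmentation-tree-cost} just proved. Third, $\OPTfree(G,w)\ge W$, from \eqref{eq:free-lower-total-weight}.
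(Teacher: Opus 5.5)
Your proposal is correct and follows the paper's proof essentially step for step. Both arguments use recursive optimal connected $\alpha$-balanced splits, the descendant contraction $|R|-1\le q(|S|-1)$ to make each scale class an antichain of disjoint clusters, the charge $\sum_{S\in\mathcal I_j(T)}W(S)\le W$, and finally $W\le\OPTfree(G,w)$. Your explicit check that every nonsingleton cluster lies in some window $j\in\{0,\ldots,J\}$ (because $q^{J+1}N<1\le|S|-1\le N$) is a detail the paper leaves implicit when it invokes finiteness of the $\kappa_j$.
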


\begin{IEEEproof}
Construct $T$ recursively.  At every current connected cluster $S$,
choose a connected $\alpha$-balanced fragmentation
$S=A_S\mathbin{\dot\cup}B_S$ attaining
$\operatorname{frag}_{\alpha}(S)$, and recurse on the two children.
The finiteness assumption ensures that the required fragmentation exists
at every nonsingleton cluster.  The construction therefore produces a
binary hierarchy whose clusters are connected.

For every $j\in\{0,\ldots,J\}$, let
\begin{equation}
\mathcal I_j(T)
:=
\left\{
S\in\mathcal I(T):
q^{j+1}N<|S|-1\le q^jN
\right\}.
\label{eq:fragmentation-scale-family}
\end{equation}
The members of $\mathcal I_j(T)$ form an antichain.  Indeed, if $R$ is
a proper descendant of $S$, then the first child of $S$ on the path to
$R$ and \eqref{eq:shifted-balanced-fragmentation} give
\begin{equation}
|R|-1
\le
q(|S|-1).
\label{eq:descendant-size-contraction}
\end{equation}
If $S\in\mathcal I_j(T)$, the right-hand side of
\eqref{eq:descendant-size-contraction} is at most $q^{j+1}N$, so $R$
cannot belong to the same scale family.  Since clusters in a hierarchy
are laminar, the sets in $\mathcal I_j(T)$ are therefore pairwise
disjoint.

At a cluster $S\in\mathcal I_j(T)$, the cost created by the selected
fragmentation satisfies
\begin{equation}
(|S|-1)w(A_S,B_S)
\le
\kappa_j(G,w;\alpha)W(S).
\label{eq:node-fragmentation-bound}
\end{equation}
Every support edge crosses the two children of exactly one internal
cluster.  Grouping edges by that cluster and then by scale gives
\begin{align}
\operatorname{cost}_{D}^{-}(T)
&=
\sum_{j=0}^{J}
\sum_{S\in\mathcal I_j(T)}
(|S|-1)w(A_S,B_S)
\nonumber\\
&\le
\sum_{j=0}^{J}\kappa_j(G,w;\alpha)
\sum_{S\in\mathcal I_j(T)}W(S)
\nonumber\\
&\le
W\sum_{j=0}^{J}\kappa_j(G,w;\alpha).
\label{eq:fragmentation-antichain-charge}
\end{align}
The last inequality holds because the clusters at a fixed scale are
pairwise disjoint, so their induced internal edge sets are disjoint.
This proves \eqref{eq:fragmentation-tree-cost}.  Finally,
\eqref{eq:free-lower-total-weight} gives $W\le\OPTfree(G,w)$, and
Corollary~\ref{cor:dasgupta-geometric-equivalence} identifies
$\OPTcard(G,w)$ with the minimum shifted cost over connected binary
hierarchies.  Minimizing the left-hand side of
\eqref{eq:fragmentation-tree-cost} therefore proves
\eqref{eq:fragmentation-gap-bound}.
\end{IEEEproof}

Theorem~\ref{thm:hereditary-fragmentation} converts a collection of
flat, locally chosen cuts into a compatible hierarchy.  Its scale sum
also distinguishes uniform regularity from stronger decay.

\begin{corollary}[Uniform and decaying fragmentation]
\label{cor:fragmentation-profiles}
Suppose that every connected $S\subseteq V$ with $|S|\ge2$ admits a
connected $\alpha$-balanced fragmentation.  If
\begin{equation}
\operatorname{frag}_{\alpha}(S)\le K,
\label{eq:uniform-fragmentation}
\end{equation}
then
\begin{equation}
\frac{\OPTcard(G,w)}{\OPTfree(G,w)}
\le
K\left(
1+\left\lfloor
\log_{1/(1-\alpha)}(n-1)
\right\rfloor
\right).
\label{eq:uniform-fragmentation-gap}
\end{equation}
More generally, if $\eta>0$ and
\begin{equation}
\operatorname{frag}_{\alpha}(S)
\le
\frac{K}{(|S|-1)^{\eta}},
\label{eq:decaying-fragmentation}
\end{equation}
then
\begin{equation}
\frac{\OPTcard(G,w)}{\OPTfree(G,w)}
\le
\frac{K(1-\alpha)^{-\eta}}
{1-(1-\alpha)^{\eta}}.
\label{eq:constant-fragmentation-gap}
\end{equation}
If instead $\theta>0$ and
\begin{equation}
\operatorname{frag}_{\alpha}(S)
\le
K(|S|-1)^{\theta},
\label{eq:growing-fragmentation}
\end{equation}
then
\begin{equation}
\frac{\OPTcard(G,w)}{\OPTfree(G,w)}
\le
\frac{K(n-1)^{\theta}}
{1-(1-\alpha)^{\theta}}.
\label{eq:polynomial-fragmentation-gap}
\end{equation}
\end{corollary}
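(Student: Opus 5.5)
All three bounds follow from Theorem~\ref{thm:hereditary-fragmentation}. The plan is to check its finiteness hypothesis and then bound each $\kappa_j(G,w;\alpha)$ using the assumed pointwise bound on $\operatorname{frag}_\alpha$. Since every connected $S$ with $|S|\ge2$ admits a connected $\alpha$-balanced fragmentation, $\operatorname{frag}_\alpha(S)$ is a minimum over a nonempty finite set and hence finite. In each of the three regimes it is moreover bounded by an explicit finite quantity, so every $\kappa_j<\infty$ and \eqref{eq:fragmentation-gap-bound} applies. It remains to evaluate $\sum_{j=0}^{J}\kappa_j$ with $q=1-\alpha$, $N=n-1$ and $J=\lfloor\log_{1/q}N\rfloor$.

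\emph{Uniform case.} Each $\kappa_j$ is a supremum of quantities at most $K$, or equals $0$ if its family is empty, so $\kappa_j\le K$. Summing over the $J+1$ scales gives $K(1+\lfloor\log_{1/(1-\alpha)}(n-1)\rfloor)$, which is \eqref{eq:uniform-fragmentation-gap}.

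\emph{Decaying case.} Any $S$ counted in $\kappa_j$ satisfies $|S|-1>q^{j+1}N$. Hence
\begin{equation*}
\operatorname{frag}_\alpha(S)\le K(q^{j+1}N)^{-\eta}=Kq^{-\eta}N^{-\eta}(q^{-\eta})^{j},
\end{equation*}
and so $\kappa_j\le Kq^{-\eta}N^{-\eta}q^{-\eta j}$. This is a geometric series increasing in $j$, so its sum is controlled by the top term. Using $q^{-\eta j}=(q^{-\eta})^{j}$, we get
\begin{equation*}
\sum_{j=0}^{J}q^{-\eta j}=\frac{q^{-\eta(J+1)}-1}{q^{-\eta}-1}\le\frac{q^{-\eta(J+1)}}{q^{-\eta}-1}=\frac{q^{-\eta J}}{1-q^{\eta}}.
\end{equation*}
Since $q^{J}\ge q^{\log_{1/q}N}=1/N$, we have $q^{-\eta J}\le N^{\eta}$. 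Combining these, the factor $N^{-\eta}$ cancels and
\begin{equation*}
\sum_{j}\kappa_j\le\frac{Kq^{-\eta}}{1-q^{\eta}},
\end{equation*}
which is \eqref{eq:constant-fragmentation-gap}.

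\emph{Growing case.} Here we use the upper end of each scale window instead: $|S|-1\le q^jN$ gives $\kappa_j\le KN^{\theta}q^{\theta j}$. Summing the decreasing geometric series gives at most $KN^{\theta}/(1-q^{\theta})$, which is \eqref{eq:polynomial-fragmentation-gap}.

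The only delicate point is the bookkeeping in the decaying case. There the series grows with $j$, and one must use the lower window endpoint together with $q^{J}\ge 1/N$ so that the $N^{\pm\eta}$ factors cancel exactly. This leaves the extra factor $(1-\alpha)^{-\eta}$ in the stated constant. The remaining steps are direct substitutions.
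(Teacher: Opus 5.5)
Your proof is correct and follows the paper's route: apply Theorem~\ref{thm:hereditary-fragmentation}, bound $\kappa_j$ by $K$, by $K(q^{j+1}N)^{-\eta}$ using the lower window endpoint, or by $K(q^jN)^{\theta}$ using the upper endpoint, and use $q^JN\ge1$ to cancel the $N^{\pm\eta}$ factors. The only difference is cosmetic: you sum the increasing series forward in $j$, while the paper reindexes by $r=J-j$ and uses $q^{J+1}N\ge q$, which is the same computation.
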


\begin{IEEEproof}
Under \eqref{eq:uniform-fragmentation}, every term in
\eqref{eq:fragmentation-gap-bound} is at most $K$, which proves
\eqref{eq:uniform-fragmentation-gap}.  Put $q=1-\alpha$ and $N=n-1$.
For \eqref{eq:decaying-fragmentation}, every nonempty scale satisfies
\begin{equation}
\kappa_j(G,w;\alpha)
\le
K(q^{j+1}N)^{-\eta}.
\label{eq:decaying-scale-bound}
\end{equation}
Because $q^J N\ge1$ and $q^{J+1}N\ge q$, reversing the order of the
scales in \eqref{eq:decaying-scale-bound} bounds their sum by
$Kq^{-\eta}\sum_{r\ge0}q^{\eta r}$, which proves
\eqref{eq:constant-fragmentation-gap}.  Under
\eqref{eq:growing-fragmentation}, the $j$th scale contributes at most
$K(q^jN)^{\theta}$.  Summing over $j\ge0$ gives
\eqref{eq:polynomial-fragmentation-gap}.
\end{IEEEproof}

The next specialization gives a concrete local weight condition.  It
also shows that the logarithmic conclusion in
\eqref{eq:uniform-fragmentation-gap} is generally unavoidable.

\begin{lemma}[Balanced edge in a bounded-degree tree]
\label{lem:balanced-tree-edge}
Let $H$ be a tree on $m\ge2$ vertices with maximum degree at most
$\Delta\ge2$.  There is an edge whose removal produces components
$A$ and $B$ satisfying
\begin{equation}
\max\{|A|-1,|B|-1\}
\le
\left(1-\frac{1}{\Delta}\right)(m-1).
\label{eq:bounded-degree-tree-balance}
\end{equation}
\end{lemma}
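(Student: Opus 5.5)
The plan is the classical centroid argument, measured in edges rather than vertices. Write $M:=m-1$ for the number of edges of $H$. Removing an edge $e$ leaves components $A_e$ and $B_e$ with $(|A_e|-1)+(|B_e|-1)=M-1$. The goal is to find an edge whose larger side has at most $(1-1/\Delta)M$ edges, counting a side with $k$ vertices as having $k-1$ edges.

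\textbf{Step 1: a walk toward the heavy side.} Orient each edge $e$ toward its larger component, breaking ties arbitrarily. Starting anywhere and repeatedly stepping along the orientation cannot cycle, because the heavy side strictly shrinks along the walk in a tree. So the walk ends at a vertex $v$ such that, for every edge $e$ at $v$, the component of $H-e$ not containing $v$ has at most $|A_e|\le m/2$ vertices. This is the standard centroid.

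\textbf{Step 2: choose the edge at the centroid.} Let the branches at $v$ be $C_1,\dots,C_k$, with $k\le\Delta$ and sizes $c_1\ge\cdots\ge c_k$, so that $\sum_i c_i=m-1=M$. Remove the edge from $v$ to $C_1$. One side is $C_1$, with $c_1-1$ edges. The other side is $V\setminus C_1$, with $m-c_1-1=M-c_1$ edges. Since $c_1$ is the largest of at most $\Delta$ parts summing to $M$, we have $c_1\ge M/\Delta$, and therefore
\[
M-c_1\le \Bigl(1-\frac{1}{\Delta}\Bigr)M .
\]

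\textbf{Step 3: bound the other side.} It remains to show $c_1-1\le(1-1/\Delta)M$. The centroid property gives $c_1\le m/2$, hence $c_1-1\le (m-2)/2=(M-1)/2$. Because $\Delta\ge2$, we have $1-1/\Delta\ge 1/2$, so $(M-1)/2\le(1-1/\Delta)M$. Both sides therefore satisfy \eqref{eq:bounded-degree-tree-balance}.

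The only delicate point is the pigeonhole step at the centroid, which needs the number of branches at $v$ to be at most $\Delta$; this holds by the degree bound. Small cases such as $m=2$ are covered by the same argument: there $M=1$, $c_1=1$, and both sides have $0$ edges.
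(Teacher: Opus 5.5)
Your proof is correct and follows the paper's argument: take a centroid, cut the edge to its largest branch, and bound the complement side by pigeonhole over at most $\Delta$ branches. The branch side is then bounded by the centroid bound $c_1\le m/2$ together with $1-1/\Delta\ge 1/2$; the only addition is your walk argument proving that a centroid exists, which the paper takes as standard.
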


\begin{IEEEproof}
Choose a centroid vertex $c$, so every component of $H-c$ has at most
$m/2$ vertices.  Let $A$ be a largest such component and let $B$ be
its complement.  Since $c$ has degree at most $\Delta$,
\begin{equation}
\frac{m-1}{\Delta}
\le
|A|
\le
\frac{m}{2}.
\label{eq:centroid-largest-branch}
\end{equation}
The edge from $c$ to $A$ separates $A$ from $B$.  The lower bound in
\eqref{eq:centroid-largest-branch} gives
\begin{equation}
|B|-1
=
m-|A|-1
\le
\left(1-\frac{1}{\Delta}\right)(m-1),
\end{equation}
and the upper bound gives the same inequality for $|A|-1$.
\end{IEEEproof}

\begin{corollary}[Locally regular weighted trees]
\label{cor:locally-regular-weighted-trees}
Let $G$ be a weighted tree of maximum degree at most $\Delta\ge2$.
Suppose that, for every connected $S\subseteq V$ with $|S|\ge2$, some
edge $e_S\in E(G[S])$ satisfying
\eqref{eq:bounded-degree-tree-balance} also satisfies
\begin{equation}
w_{e_S}
\le
\Lambda\frac{W(S)}{|S|-1}.
\label{eq:local-balanced-edge-regularity}
\end{equation}
Then
\begin{equation}
\frac{\OPTcard(G,w)}{\OPTfree(G,w)}
\le
\Lambda\left(
1+
\left\lfloor
\log_{\Delta/(\Delta-1)}(n-1)
\right\rfloor
\right).
\label{eq:weighted-tree-log-gap}
\end{equation}
In particular, \eqref{eq:weighted-tree-log-gap} holds if
\begin{equation}
\frac{\max_{e\in E}w_e}{\min_{e\in E}w_e}
\le
\Lambda.
\label{eq:tree-weight-aspect-ratio}
\end{equation}
\end{corollary}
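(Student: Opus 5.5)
The plan is to derive the bound from Corollary~\ref{cor:fragmentation-profiles} in its uniform form, taking $\alpha=1/\Delta$ and $K=\Lambda$. Since $\Delta\ge2$, we have $\alpha\in(0,1/2]$ as required. The identity $1/(1-\alpha)=\Delta/(\Delta-1)$ turns \eqref{eq:uniform-fragmentation-gap} into exactly \eqref{eq:weighted-tree-log-gap}. What remains is to verify the two hypotheses of that corollary. First, every connected $S\subseteq V$ with $|S|\ge2$ must admit a connected $\alpha$-balanced fragmentation. Second, $\operatorname{frag}_{1/\Delta}(S)\le\Lambda$ must hold for every such $S$.

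Fix a connected $S$ with $|S|\ge2$. Because $G$ is a tree, $G[S]$ is a subtree on $m=|S|$ vertices, and its maximum degree is at most $\Delta$. Removing any edge of $G[S]$ leaves exactly two connected components $A$ and $B$, and the cut weight $w(A,B)$ is the weight of that single edge. The edge $e_S$ supplied by the hypothesis satisfies \eqref{eq:bounded-degree-tree-balance}, which is exactly \eqref{eq:shifted-balanced-fragmentation} with $\alpha=1/\Delta$. So $e_S$ induces an admissible fragmentation, and Lemma~\ref{lem:balanced-tree-edge} independently guarantees that one exists. Evaluating the objective in \eqref{eq:local-fragmentation-number} at this fragmentation and applying \eqref{eq:local-balanced-edge-regularity} gives
\[
\operatorname{frag}_{1/\Delta}(S)\le\frac{(|S|-1)w_{e_S}}{W(S)}\le\Lambda,
\]
which is the uniform condition \eqref{eq:uniform-fragmentation}. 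The main result then follows immediately from Corollary~\ref{cor:fragmentation-profiles}.

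For the aspect-ratio special case, take $e_S$ to be the balanced edge from Lemma~\ref{lem:balanced-tree-edge} applied to $G[S]$. Since $G[S]$ is a tree, it has exactly $|S|-1$ edges, so
\[
W(S)\ge(|S|-1)\min_e w_e.
\]
Combining this with \eqref{eq:tree-weight-aspect-ratio} gives
\[
w_{e_S}\le\max_e w_e\le\Lambda\min_e w_e\le\Lambda\,\frac{W(S)}{|S|-1},
\]
which is \eqref{eq:local-balanced-edge-regularity}.

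I do not expect a real obstacle here. The only points that need care are that connectedness of $G[S]$ in a tree forces the edge count $|S|-1$, that an edge cut of a subtree produces exactly two connected sides, and that the logarithm base matches.
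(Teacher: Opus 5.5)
Your proposal is correct and follows the paper's proof essentially step for step. You take $\alpha=1/\Delta$ via Lemma~\ref{lem:balanced-tree-edge}, observe that a single-edge cut of a subtree has weight $w_{e_S}$ so that $\operatorname{frag}_{1/\Delta}(S)\le\Lambda$, and apply the uniform case of Corollary~\ref{cor:fragmentation-profiles}. You then handle the aspect-ratio case through $W(S)\ge(|S|-1)\min_e w_e$; your extra checks ($\alpha\le 1/2$, connectivity of both sides, and $1/(1-\alpha)=\Delta/(\Delta-1)$) are details the paper leaves implicit.
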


\begin{IEEEproof}
Lemma~\ref{lem:balanced-tree-edge} permits $\alpha=1/\Delta$.  Since a
tree cut induced by one edge has weight $w_{e_S}$,
\eqref{eq:local-balanced-edge-regularity} implies
$\operatorname{frag}_{1/\Delta}(S)\le\Lambda$.  Apply
Corollary~\ref{cor:fragmentation-profiles}.  Under
\eqref{eq:tree-weight-aspect-ratio}, every connected $S$ has exactly
$|S|-1$ internal edges, and hence
\begin{equation}
\frac{W(S)}{|S|-1}
\ge
\min_{e\in E}w_e.
\end{equation}
Any balanced edge supplied by Lemma~\ref{lem:balanced-tree-edge} then
satisfies \eqref{eq:local-balanced-edge-regularity}.
\end{IEEEproof}

\begin{corollary}[Explicit construction on locally regular trees]
\label{cor:tree-algorithm}
Under the hypotheses of Corollary
\ref{cor:locally-regular-weighted-trees}, put
\begin{equation}
H_{n,\Delta}
:=
1+
\left\lfloor
\log_{\Delta/(\Delta-1)}(n-1)
\right\rfloor.
\label{eq:tree-recursion-depth}
\end{equation}
A hierarchy $T$ satisfying
\begin{equation}
\operatorname{cost}_{D}^{-}(T)
\le
\Lambda H_{n,\Delta} W
\le
\Lambda H_{n,\Delta}\OPTfree(G,w)
\le
\Lambda H_{n,\Delta}\OPTcard(G,w)
\label{eq:tree-constructive-bound}
\end{equation}
can be constructed in $O(nH_{n,\Delta})$ time.  In particular, the
running time is $O(n\log n)$ for fixed $\Delta$.
\end{corollary}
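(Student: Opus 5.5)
The construction is the recursive procedure from the proof of Theorem~\ref{thm:hereditary-fragmentation}, specialized to trees with $\alpha=1/\Delta$. At every current connected cluster $S$ we split along the edge $e_S$ from the hypothesis of Corollary~\ref{cor:locally-regular-weighted-trees}. Removing $e_S$ from the subtree $G[S]$ gives two connected parts satisfying \eqref{eq:bounded-degree-tree-balance}. Its cut weight is $w_{e_S}$, so by \eqref{eq:local-balanced-edge-regularity} we get $\operatorname{frag}_{1/\Delta}(S)\le\Lambda$. The argument of Theorem~\ref{thm:hereditary-fragmentation} charges the cost of each scale to a pairwise disjoint family of clusters, and there are $H_{n,\Delta}$ scales. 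This gives $\operatorname{cost}_{D}^{-}(T)\le\Lambda H_{n,\Delta}W$. We should note that the theorem's argument uses only the bound $(|S|-1)w(A_S,B_S)\le\kappa_jW(S)$ together with balance, not optimality of the split. The remaining two inequalities in \eqref{eq:tree-constructive-bound} are \eqref{eq:free-lower-total-weight} and the first inequality of Theorem~\ref{thm:sharp-gap}.

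For the running time, organize the recursion by depth. By \eqref{eq:descendant-size-contraction} with $q=1-1/\Delta$, a cluster at depth $d$ satisfies $|S|-1\le q^d(n-1)$. Clusters with $|S|\ge2$ therefore occur only at depths $d$ with $q^d(n-1)\ge1$, that is, $d\le\lfloor\log_{1/q}(n-1)\rfloor$. Hence the recursion has at most $H_{n,\Delta}$ levels of splitting. Clusters at the same depth are disjoint, so the total size of all clusters processed at one level is at most $n$. It then suffices to find a valid split edge of a subtree $S$ in $O(|S|)$ time.

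For an unweighted-balance edge this is standard. One DFS over $G[S]$ computes all subtree sizes, and the centroid $c$ is found by walking toward a heavy child. The edge to the largest branch at $c$ is the edge produced in the proof of Lemma~\ref{lem:balanced-tree-edge}. Each vertex and edge of $S$ is touched $O(1)$ times, since degrees are at most $\Delta$ and $\Delta$ is constant for the $O(n\log n)$ claim. In general this costs $O(|S|\Delta)=O(|S|)$ because $|E(G[S])|=|S|-1$. Splitting $S$ into its two components is a byproduct of the same DFS. The total cost is $O(n)$ per level, hence $O(nH_{n,\Delta})$. When $\Delta$ is fixed, $H_{n,\Delta}=O(\log n)$, which gives $O(n\log n)$.

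The main obstacle is the weighted hypothesis. Corollary~\ref{cor:locally-regular-weighted-trees} only guarantees that \emph{some} balanced edge satisfies \eqref{eq:local-balanced-edge-regularity}, and that edge need not be the one the centroid rule returns. To handle this, I would enumerate every edge of $G[S]$ in one pass, using subtree sizes to test the balance condition \eqref{eq:bounded-degree-tree-balance} in $O(1)$ per edge. The same DFS accumulates $W(S)$. Among the balanced edges we pick one of minimum weight. This edge satisfies \eqref{eq:local-balanced-edge-regularity} whenever any balanced edge does, and the pass still costs $O(|S|)$. So the per-level linear bound is preserved, and under the aspect-ratio condition \eqref{eq:tree-weight-aspect-ratio} any balanced edge (e.g.\ the centroid one) already suffices.
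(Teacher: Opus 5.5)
Your proposal is correct and follows the paper's proof essentially step for step. One traversal of $G[S]$ gives $W(S)$ and the subtree sizes. You cut the minimum-weight edge among those satisfying \eqref{eq:bounded-degree-tree-balance}, which inherits \eqref{eq:local-balanced-edge-regularity} from the hypothesis. The contraction \eqref{eq:descendant-size-contraction} bounds the depth by $H_{n,\Delta}$, with $O(n)$ work per level by disjointness. The scale-antichain charging of Theorem~\ref{thm:hereditary-fragmentation}, together with $W\le\OPTfree(G,w)\le\OPTcard(G,w)$, then gives \eqref{eq:tree-constructive-bound}. One small tidy-up: the step $O(|S|\Delta)=O(|S|)$ is unnecessary, because scanning the tree $G[S]$ costs $O(|S|)$ simply since it has $|S|-1$ edges, so the $O(nH_{n,\Delta})$ bound does not need $\Delta$ to be constant.
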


\begin{IEEEproof}
At a current connected subtree $S$, root $G[S]$ arbitrarily and compute
$W(S)$ and all rooted-subtree sizes in one traversal.  For every edge
$e\in E(G[S])$, these sizes give the cardinalities of the two components
of $G[S]-e$.  Among the edges satisfying
\eqref{eq:bounded-degree-tree-balance}, select one of minimum weight.
The hypothesis guarantees that this selected edge also satisfies
\eqref{eq:local-balanced-edge-regularity}.  Delete it and recurse on the
two components.

Every child $R$ of $S$ obeys
\begin{equation}
|R|-1
\le
\left(1-\frac{1}{\Delta}\right)(|S|-1),
\label{eq:tree-algorithm-contraction}
\end{equation}
so the recursion depth is at most $H_{n,\Delta}$.  At any fixed depth,
the current subtrees are vertex disjoint; hence all scans at that depth
take $O(n)$ time.  This proves the stated running time.  Charging the cut
edge chosen at $S$ by $(|S|-1)w_{e_S}\le\Lambda W(S)$ and applying the
same scale antichain argument as in Theorem
\ref{thm:hereditary-fragmentation} gives the first inequality in
\eqref{eq:tree-constructive-bound}.  The remaining inequalities follow
from $W\le\OPTfree(G,w)\le\OPTcard(G,w)$.
\end{IEEEproof}

\begin{theorem}[Tight logarithmic gap on bounded-degree trees]
\label{thm:binary-tree-log-gap}
There is a sequence of unweighted trees $B_h$ of maximum degree $3$,
with $n_h=2^{h+1}-1$ vertices, such that
\begin{equation}
\frac{\OPTcard(B_h)}{\OPTfree(B_h)}
=
\Theta(\log n_h).
\label{eq:binary-tree-theta-log}
\end{equation}
Every connected induced subgraph of $B_h$ satisfies
$\operatorname{frag}_{1/3}(S)\le1$.
\end{theorem}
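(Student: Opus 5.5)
The plan is to take $B_h$ to be the complete binary tree of height $h$: a root, each internal vertex with two children, $2^h$ leaves. Then $n_h=2^{h+1}-1$, the maximum degree is $3$, and $W=n_h-1$ since the tree is unweighted. I will prove three things: the fragmentation claim, an $O(n_h)$ bound on $\OPTfree$, and a matching $\Theta(n_h\log n_h)$ bound on $\OPTcard$.

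\emph{Fragmentation claim.} Let $S$ be connected with $|S|\ge2$. Then $G[S]$ is a tree of maximum degree at most $3$. Lemma~\ref{lem:balanced-tree-edge} with $\Delta=3$ supplies an edge whose removal is a connected $1/3$-balanced fragmentation. Its cut weight is $1$ and $W(S)=|S|-1$, so $\operatorname{frag}_{1/3}(S)\le1$. The upper bound $\OPTcard(B_h)/\OPTfree(B_h)=O(\log n_h)$ then follows from Corollary~\ref{cor:locally-regular-weighted-trees} with $\Lambda=1$ and $\Delta=3$.

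\emph{Free optimum.} We have $\OPTfree\ge W=n_h-1$ by \eqref{eq:free-lower-total-weight}. For the matching upper bound I use Corollary~\ref{cor:connected-diameter-representation} with an explicit connected hierarchy. Recursively split a complete subtree of height $k$ by cutting one root edge, which leaves a complete subtree of height $k-1$ and the root with the other height-$(k-1)$ subtree. Then cut the remaining root edge. Every cluster produced has diameter at most $2k$. An edge joining depth $h-k$ to depth $h-k+1$ is first separated inside a cluster of diameter $O(k)$, and there are $2^{h-k+1}$ such edges. The free cost is therefore $O\bigl(\sum_k k\,2^{h-k}\bigr)=O(2^h)=O(n_h)$. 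Hence $\OPTfree(B_h)=\Theta(n_h)$.

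\emph{Lower bound on $\OPTcard$.} This is the main step. Let $g(h):=\OPTcard(B_h)$. By Corollary~\ref{cor:dasgupta-geometric-equivalence} it is attained by a connected binary hierarchy $T$. In a tree, every internal cluster of such a $T$ is split by exactly one edge, and $\operatorname{cost}_D^-(T)$ is the sum of $|S|-1$ over internal clusters $S$.

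\emph{Restriction tool.} For a connected $Y\subseteq V$, the clusters $S\cap Y$ are intersections of subtrees and hence connected. They form a connected hierarchy of $Y$ in which each edge of $G[Y]$ has lowest-common-ancestor cluster $C_T(x,y)\cap Y$. Thus the cost of $T$ on the edges of $G[Y]$ is at least the cost of the restriction, which is at least $\OPTcard(G[Y])$.

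\emph{Recurrence.} Apply this to the two copies $L,R$ of $B_{h-1}$ below the root. Their edge sets are disjoint and avoid the two root edges. Now look at the edge $e$ cut at the root cluster $V$.

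If $e$ lies in $L$, say, then in $T$ it pays $n_h-1$, while in the restriction to $L$ it pays only $|L|-1$. This gains $n_h-|L|\ge n_h/2$ over $g(h-1)$, and the two root edges pay at least $1$ each.

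If $e$ is a root edge, it pays $n_h-1$ outright on top of the restriction bounds.

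In either case $g(h)\ge 2g(h-1)+(n_h-1)/2$. Unrolling gives $g(h)\ge c\,h\,2^h=\Omega(n_h\log n_h)$.

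The step I expect to need the most care is the restriction inequality: that restricting $T$ to $L$ never increases the label of an edge inside $L$, and that the gain at the first cut edge is not double counted against the root-edge terms. Combining the three parts gives the ratio $\Theta(\log n_h)$.
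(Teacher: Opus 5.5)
Your proof is correct. The construction of $B_h$, the fragmentation claim, the $O(\log n_h)$ upper bound, and the $O(n_h)$ bound on $\OPTfree(B_h)$ all match the paper; the paper builds the same hierarchy and records it as $F_h\le 2F_{h-1}+4h$.

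The genuine difference is the lower bound on $\OPTcard(B_h)$.

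\textbf{The paper's route.} It takes an optimal connected hierarchy and uses that, over a support tree, each internal cluster is split by exactly one edge. This rewrites the shifted cost as $\sum_v\operatorname{depth}_T(v)-(n_h-1)$. Kraft's inequality then gives $\OPTcard(B_h)\ge n_h\log_2 n_h-(n_h-1)$.

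\textbf{Your route.} You restrict $T$ to the two copies of $B_{h-1}$ and charge the gain of the root-cut edge. This gives $g(h)\ge 2g(h-1)+(n_h-1)/2$, hence $g(h)=\Omega(h2^h)$. Since the paper's definitions assume $n\ge2$, start the unrolling from $g(1)=3$ or declare $g(0)=0$ as a convention.

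\textbf{What each buys.} The paper's argument is shorter and has roughly twice your constant. More importantly, it never uses the shape of $B_h$: every unweighted $n$-vertex tree has $\OPTcard\ge n\log_2 n-(n-1)$. So the logarithmic gap on $B_h$ comes entirely from its logarithmic diameter pushing $\OPTfree$ down to $O(n)$.

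Your argument needs the self-similarity of $B_h$, but it avoids both Kraft and the one-edge-per-split identity. Your restriction inequality is also more general than you state. For any support graph, any binary hierarchy $T$, and any $Y\subseteq V$, restricting $T$ to $Y$ gives a binary hierarchy on $Y$ with $C_{T|_Y}(x,y)=C_T(x,y)\cap Y$. Corollary~\ref{cor:dasgupta-geometric-equivalence} applied to $G[Y]$ then bounds its cost below by $\OPTcard(G[Y])$. So the subtree-intersection connectivity you invoke is not needed, and the charging scheme carries over to other families built recursively from two disjoint halves joined by a few edges.
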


\begin{IEEEproof}
Let $B_h$ be the complete binary tree of height $h$, with all support
edges assigned weight $1$.  Every connected induced subgraph of $B_h$
is a tree of maximum degree at most $3$.  Lemma
\ref{lem:balanced-tree-edge} supplies a $1/3$-balanced edge cut.  Since
$W(S)=|S|-1$ and the cut has weight $1$, its local fragmentation number
is at most $1$.  Corollary~\ref{cor:fragmentation-profiles} gives the
$O(\log n_h)$ upper bound in \eqref{eq:binary-tree-theta-log}.

For the lower bound, Corollary
\ref{cor:dasgupta-geometric-equivalence} allows an optimal cardinality
hierarchy to be chosen connected.  In a connected hierarchy over a
support tree, the two children of every internal cluster are separated
by exactly one support edge.  Grouping each support edge at the unique
internal cluster where its endpoints are first separated therefore gives
the following identity.  Here $\operatorname{depth}_{T}(v)$ denotes the
number of edges on the path from the hierarchy root to the leaf $v$, with
the root at depth $0$:
\begin{align}
\operatorname{cost}_{D}^{-}(T)
&=
\sum_{S\in\mathcal I(T)}(|S|-1)
\nonumber\\
&=
\sum_{v\in V(B_h)}\operatorname{depth}_{T}(v)
-(n_h-1).
\label{eq:tree-cost-external-depth}
\end{align}
Kraft's inequality for the $n_h$ leaves of the binary hierarchy gives
\begin{equation}
\sum_{v\in V(B_h)}\operatorname{depth}_{T}(v)
\ge
n_h\log_2 n_h.
\label{eq:external-depth-lower-bound}
\end{equation}
Consequently,
\begin{equation}
\OPTcard(B_h)
\ge
n_h\log_2 n_h-(n_h-1).
\label{eq:binary-tree-card-lower}
\end{equation}

It remains to upper-bound the free optimum.  Recursively build a
connected hierarchy on the two height-$(h-1)$ child subtrees, merge the
root vertex with the first child subtree, and then merge the resulting
cluster with the second child subtree.  The two top-level support edges
are each assigned at a cluster of diameter at most $2h$.  If $F_h$
denotes the diameter cost of this hierarchy, then
\begin{equation}
F_h
\le
2F_{h-1}+4h,
\qquad
F_0=0.
\label{eq:binary-tree-free-recurrence}
\end{equation}
Therefore
\begin{align}
\OPTfree(B_h)
&\le
F_h
\le
4\sum_{r=1}^{h}2^{h-r}r
\nonumber\\
&<
8\cdot2^h
=
4(n_h+1).
\label{eq:binary-tree-free-upper}
\end{align}
Combining \eqref{eq:binary-tree-card-lower} and
\eqref{eq:binary-tree-free-upper} proves the $\Omega(\log n_h)$ lower
bound and completes the proof.
\end{IEEEproof}

\begin{remark}[Oracle form beyond trees]
\label{rem:fragmentation-algorithm}
For a general support graph, Theorem
\ref{thm:hereditary-fragmentation} is constructive only relative to a
balanced-connected-cut oracle.  If the oracle returns a cut whose
normalized weight is within a factor $\beta$ of
\eqref{eq:local-fragmentation-number}, recursive calls produce the same
hierarchy with an additional factor $\beta$ in
\eqref{eq:fragmentation-gap-bound}.  We make no claim here that this
oracle problem is polynomial-time solvable on arbitrary graphs;
Corollary~\ref{cor:tree-algorithm} is the explicit algorithmic
specialization used in this paper.
\end{remark}

\section{Energy Decomposition}
\label{sec:energy}

The sharp theorem determines the worst-case gap.  We next identify the
quantity that controls smaller, instance-dependent gaps.

Let $T\in\mathcal T_{\mathrm{conn}}^{\mathrm{bin}}(G)$, and retain the
notation $\mathcal I(T)$ for its internal clusters.  For
$S\in\mathcal I(T)$,
write $S^0$ and $S^1$ for its two children and define
\begin{align}
b_T(S)
&:=
\sum_{\substack{xy\in E\\x\in S^0,\ y\in S^1}}w_{xy},
\label{eq:cut-mass}\\
\Delta_T(S)
&:=
\operatorname{diam}_{d_G}(S),
\label{eq:cluster-diameter}\\
a_T(S)
&:=
\Delta_T(S)b_T(S).
\label{eq:free-energy}
\end{align}
The diameter objective associated with $T$ is
\begin{equation}
F(T)
:=
\sum_{xy\in E}w_{xy}
\operatorname{diam}_{d_G}\bigl(C_T(x,y)\bigr).
\label{eq:tree-free-cost}
\end{equation}

\begin{lemma}[Cut-energy decomposition]
\label{lem:cut-energy-decomposition}
Every connected binary hierarchy $T$ satisfies
\begin{align}
F(T)
&=
\sum_{S\in\mathcal I(T)}\Delta_T(S)b_T(S)
=
\sum_{S\in\mathcal I(T)}a_T(S),
\label{eq:free-cut-decomposition}\\
\operatorname{cost}_{D}^{-}(T)
&=
\sum_{S\in\mathcal I(T)}(|S|-1)b_T(S).
\label{eq:dasgupta-cut-decomposition}
\end{align}
\end{lemma}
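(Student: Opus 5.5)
The plan is to regroup both edge sums in Lemma~\ref{lem:cut-energy-decomposition} by the internal cluster at which each edge's endpoints are first separated. Both objectives have the form $\sum_{xy\in E}w_{xy}\,\phi\bigl(C_T(x,y)\bigr)$: for $F(T)$ take $\phi(S)=\operatorname{diam}_{d_G}(S)$, and for $\operatorname{cost}_{D}^{-}(T)$ take $\phi(S)=|S|-1$. So it suffices to prove one general regrouping identity and then specialize it twice.

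First, I would show that every support edge $xy\in E$ is assigned to exactly one internal cluster. Since $x\ne y$, the lowest common ancestor $S:=C_T(x,y)$ is an internal node of the binary hierarchy $T$. By the definition of lowest common ancestor, $x$ and $y$ lie in different children of $S$; after relabeling, $x\in S^0$ and $y\in S^1$. Hence $xy$ is counted in $b_T(S)$.

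Conversely, suppose $xy$ is counted in $b_T(S')$ for some $S'\in\mathcal I(T)$, so that $x\in S'^0$ and $y\in S'^1$. Then $S'$ contains both endpoints while no child of $S'$ does, which forces $S'=C_T(x,y)$. The edges counted in $b_T(S)$ as $S$ ranges over $\mathcal I(T)$ therefore partition $E$. In $b_T(S)$ each unordered edge is counted once under the convention $x\in S^0$, $y\in S^1$, so I would state explicitly that edges are treated as unordered pairs to avoid any factor of two.

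Next, I would write
\[
\sum_{xy\in E}w_{xy}\,\phi\bigl(C_T(x,y)\bigr)
=\sum_{S\in\mathcal I(T)}\phi(S)\sum_{\substack{xy\in E\\ x\in S^0,\ y\in S^1}}w_{xy}
=\sum_{S\in\mathcal I(T)}\phi(S)\,b_T(S),
\]
which is valid because of the partition above. Taking $\phi=\operatorname{diam}_{d_G}$ gives \eqref{eq:free-cut-decomposition}; the second equality there is just the definition $a_T(S)=\Delta_T(S)b_T(S)$. Taking $\phi(S)=|S|-1$ gives \eqref{eq:dasgupta-cut-decomposition}.

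Connectedness of $T$ is not needed for the identities themselves. It only guarantees that $F(T)$ agrees with the diameter-labeled ultrametric $u_T^{\mathrm{diam}}$ used earlier, so I would add a remark saying so. There is no real obstacle here; the only point needing care is the uniqueness of the separating cluster, which follows directly from the lowest-common-ancestor definition in a binary laminar hierarchy.
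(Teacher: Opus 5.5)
Your proposal is correct and is essentially the paper's argument: each support edge $xy$ is charged to the unique internal cluster $C_T(x,y)$ whose two children separate its endpoints, and regrouping the edge sum by that cluster gives both identities (with $\phi(S)=\operatorname{diam}_{d_G}(S)$ and $\phi(S)=|S|-1$). Your extra checks are accurate expansions of the paper's one-line proof: the uniqueness of the separating cluster, the unordered-edge convention in $b_T(S)$, and the observation that connectedness is not needed for the identities themselves (it is used later to get $\Delta_T(S)\le|S|-1$, hence $\rho_T(S)\le n-1$).
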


\begin{IEEEproof}
Every graph edge $xy$ crosses the two children of exactly one internal
cluster, namely $C_T(x,y)$.  Grouping edges by this unique cluster gives
both identities.
\end{IEEEproof}

For every internal cluster $S$, define its cardinality-to-diameter ratio
by
\begin{equation}
\rho_T(S)
:=
\frac{|S|-1}{\Delta_T(S)}.
\label{eq:density-ratio}
\end{equation}
Connectivity gives $\Delta_T(S)\le |S|-1$, while
$\Delta_T(S)\ge1$.  Therefore
\begin{equation}
1\le \rho_T(S)\le n-1.
\label{eq:density-range}
\end{equation}
Since $F(T)>0$, the free-energy distribution
\begin{equation}
\mu_T(S)
:=
\frac{a_T(S)}{F(T)}
\qquad
\bigl(S\in\mathcal I(T)\bigr)
\label{eq:energy-measure}
\end{equation}
is a probability measure on the internal clusters with positive energy.

\begin{lemma}[Gap identity on a fixed hierarchy]
\label{lem:gap-identity}
For every $T\in\mathcal T_{\mathrm{conn}}^{\mathrm{bin}}(G)$,
\begin{equation}
\frac{\operatorname{cost}_{D}^{-}(T)}{F(T)}
=
\sum_{S\in\mathcal I(T)}\mu_T(S)\rho_T(S)
=
\mathbb E_{\mu_T}[\rho_T].
\label{eq:gap-expectation}
\end{equation}
In particular, if $T_\star$ is free optimal in the normal form of
Corollary~\ref{cor:connected-diameter-representation}, then
\begin{equation}
\frac{\OPTcard(G,w)}{\OPTfree(G,w)}
\le
\mathbb E_{\mu_{T_\star}}[\rho_{T_\star}].
\label{eq:free-optimal-profile-bound}
\end{equation}
\end{lemma}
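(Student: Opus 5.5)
The plan is to start from the cut-energy decomposition of Lemma~\ref{lem:cut-energy-decomposition} and rewrite each term of the Dasgupta sum in terms of the free energy $a_T(S)$. For every internal cluster $S$, connectivity of $G[S]$ and $|S|\ge2$ give $\Delta_T(S)\ge1$, so $\rho_T(S)$ is well defined. Then
\begin{equation*}
(|S|-1)\,b_T(S)=\rho_T(S)\,\Delta_T(S)\,b_T(S)=\rho_T(S)\,a_T(S).
\end{equation*}
Summing over $\mathcal I(T)$ and using \eqref{eq:dasgupta-cut-decomposition} gives
\begin{equation*}
\operatorname{cost}_{D}^{-}(T)=\sum_{S\in\mathcal I(T)}\rho_T(S)\,a_T(S).
\end{equation*}

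Next I divide by $F(T)$. Every support edge contributes at least $w_{xy}\cdot1>0$ to \eqref{eq:tree-free-cost}, so $F(T)\ge W>0$. Dividing yields $\sum_S \mu_T(S)\rho_T(S)$, which is \eqref{eq:gap-expectation}. Clusters with $b_T(S)=0$ have $a_T(S)=0$ and drop out of both sides. This is consistent with $\mu_T$ being a probability measure supported on the positive-energy clusters, since $\sum_S a_T(S)=F(T)$ by \eqref{eq:free-cut-decomposition}.

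For \eqref{eq:free-optimal-profile-bound}, I take $T_\star\in\mathcal T_{\mathrm{conn}}^{\mathrm{bin}}(G)$ attaining the minimum in Corollary~\ref{cor:connected-diameter-representation}, so that $F(T_\star)=\OPTfree(G,w)$. By Corollary~\ref{cor:dasgupta-geometric-equivalence}, $\OPTcard(G,w)\le\operatorname{cost}_{D}^{-}(T_\star)$ because $T_\star$ is a connected binary hierarchy. Dividing by $F(T_\star)$ and applying the identity just proved gives the bound.

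None of these steps is a real obstacle. The only care needed is the bookkeeping of zero-energy clusters and positivity of $F(T)$, and both are handled above.
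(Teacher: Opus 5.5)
Your proposal is correct and follows the paper's proof exactly: substitute $|S|-1=\rho_T(S)\Delta_T(S)$ into the cut decomposition, divide by $F(T)$, then evaluate the cardinality objective on the free-optimal connected hierarchy $T_\star$. One small remark: the zero-energy bookkeeping is unnecessary, because every internal cluster of a connected hierarchy is connected, so some edge joins its two children and $b_T(S)>0$ always.
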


\begin{IEEEproof}
Substitute $|S|-1=\rho_T(S)\Delta_T(S)$ into
\eqref{eq:dasgupta-cut-decomposition} and divide by
\eqref{eq:free-cut-decomposition}.  For the final assertion, evaluate
the cardinality objective on the feasible hierarchy $T_\star$ and use
$F(T_\star)=\OPTfree(G,w)$.
\end{IEEEproof}

Combining Lemma~\ref{lem:gap-identity} with
Theorem~\ref{thm:sharp-gap} yields the useful two-scale estimate
\begin{equation}
\frac{\OPTcard(G,w)}{\OPTfree(G,w)}
\le
\min\left\{
\frac{2n-1}{3},
\mathbb E_{\mu_{T_\star}}[\rho_{T_\star}]
\right\}.
\label{eq:combined-gap-bound}
\end{equation}

The following elementary tail formulation records when this weighted
mean is bounded.  It uses the free-energy distribution appearing in the
objective, rather than a distribution obtained from uniformly sampled
pairs.

\begin{theorem}[Energy-tail transfer]
\label{thm:energy-profile-transfer}
Let $T\in\mathcal T_{\mathrm{conn}}^{\mathrm{bin}}(G)$ satisfy
\begin{equation}
F(T)\le \alpha\,\OPTfree(G,w)
\label{eq:approximate-free-tree}
\end{equation}
for some $\alpha\ge1$.  Suppose that a function
$\Phi:[1,n-1]\to\mathbb R_{\ge0}$ satisfies
\begin{equation}
\mu_T\bigl(\{S:\rho_T(S)>t\}\bigr)
\le
\Phi(t)
\qquad
\text{for every }t\in[1,n-1].
\label{eq:energy-tail-assumption}
\end{equation}
Then
\begin{equation}
\OPTcard(G,w)
\le
\alpha
\left(
1+\int_1^{n-1}\Phi(t)\,dt
\right)
\OPTfree(G,w).
\label{eq:energy-profile-conclusion}
\end{equation}
\end{theorem}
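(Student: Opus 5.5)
The plan is to combine the gap identity of Lemma~\ref{lem:gap-identity} with the layer-cake representation of the mean of a bounded nonnegative random variable. First, since $T$ is a connected binary hierarchy, Corollary~\ref{cor:dasgupta-geometric-equivalence} makes it feasible for the cardinality problem. Hence $\OPTcard(G,w)\le\operatorname{cost}_{D}^{-}(T)$, and Lemma~\ref{lem:gap-identity} applied to $T$ (not to a free optimum) gives
\begin{equation*}
\operatorname{cost}_{D}^{-}(T)=F(T)\,\mathbb E_{\mu_T}[\rho_T].
\end{equation*}
Combining this with \eqref{eq:approximate-free-tree} yields $\OPTcard(G,w)\le\alpha\,\mathbb E_{\mu_T}[\rho_T]\,\OPTfree(G,w)$. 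It therefore suffices to prove $\mathbb E_{\mu_T}[\rho_T]\le 1+\int_1^{n-1}\Phi(t)\,dt$.

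For that step, use \eqref{eq:density-range}: $\rho_T$ takes values in $[1,n-1]$ on every internal cluster, in particular on the support of $\mu_T$. Write
\begin{equation*}
\rho_T=1+\int_1^{n-1}\mathbf 1[\rho_T>t]\,dt,
\end{equation*}
which holds pointwise because $\rho_T\ge1$ and $\rho_T\le n-1$. Take expectation under the probability measure $\mu_T$ and exchange expectation and integral. This is a finite sum against a Lebesgue integral of bounded measurable functions, so Tonelli applies trivially. The result is $\mathbb E_{\mu_T}[\rho_T]=1+\int_1^{n-1}\mu_T(\{\rho_T>t\})\,dt$. Then bound the integrand pointwise by $\Phi(t)$ using \eqref{eq:energy-tail-assumption}. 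If $\Phi$ is not assumed measurable, read the integral as an upper integral, or note that the tail function is a nonincreasing step function and the inequality persists.

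There is one subtlety. The measure $\mu_T$ is supported on clusters with positive energy. Clusters with $a_T(S)=0$ contribute nothing to $\operatorname{cost}_{D}^{-}(T)$, because $b_T(S)=0$ there, so the identity is unaffected. Also, $F(T)>0$ holds because $E\neq\emptyset$ and every edge has diameter label at least $1$. No step is a real obstacle. The only care needed is to apply the gap identity to the given approximately optimal $T$ rather than to a free optimum, and to verify the pointwise layer-cake formula at the endpoints $t=1$ and $t=n-1$.
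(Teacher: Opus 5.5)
Your proposal is correct and follows the paper's proof. Both apply Lemma~\ref{lem:gap-identity} to the given $T$, use Corollary~\ref{cor:dasgupta-geometric-equivalence} and \eqref{eq:approximate-free-tree} to obtain $\OPTcard(G,w)\le\operatorname{cost}_{D}^{-}(T)\le\alpha\,\mathbb E_{\mu_T}[\rho_T]\,\OPTfree(G,w)$, and then bound $\mathbb E_{\mu_T}[\rho_T]$ by the layer-cake formula restricted to $[1,n-1]$ via \eqref{eq:density-range}. Your positive-energy caveat is vacuous, though harmless: every internal cluster of a connected hierarchy has a crossing edge between its two children, so $a_T(S)>0$ for all $S\in\mathcal I(T)$.
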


\begin{IEEEproof}
For the nonnegative random variable $\rho_T$, the layer-cake identity
and \eqref{eq:density-range} give
\begin{align}
\mathbb E_{\mu_T}[\rho_T]
&=
\int_0^\infty
\mu_T\bigl(\{S:\rho_T(S)>t\}\bigr)\,dt
\nonumber\\
&=
1+
\int_1^{n-1}
\mu_T\bigl(\{S:\rho_T(S)>t\}\bigr)\,dt
\nonumber\\
&\le
1+\int_1^{n-1}\Phi(t)\,dt.
\label{eq:layer-cake-bound}
\end{align}
Lemma~\ref{lem:gap-identity} gives
$\operatorname{cost}_{D}^{-}(T)=F(T)\mathbb E_{\mu_T}[\rho_T]$.
By Corollary~\ref{cor:dasgupta-geometric-equivalence},
$\OPTcard(G,w)$ is no larger than the shifted cardinality cost of $T$.
Thus \eqref{eq:approximate-free-tree} and
\eqref{eq:layer-cake-bound} prove the claim.
\end{IEEEproof}

\begin{corollary}[Polynomial tail]
\label{cor:polynomial-tail}
Under the hypotheses of Theorem~\ref{thm:energy-profile-transfer}, if
\begin{equation}
\Phi(t)\le A t^{-(1+\eta)}
\label{eq:polynomial-tail}
\end{equation}
for constants $A>0$ and $\eta>0$, then
\begin{equation}
\OPTcard(G,w)
\le
\alpha\left(1+\frac{A}{\eta}\right)
\OPTfree(G,w).
\label{eq:constant-profile-gap}
\end{equation}
If instead $\Phi(t)\le A/t$, the same argument gives
\begin{equation}
\OPTcard(G,w)
\le
\alpha\bigl(1+A\log(n-1)\bigr)
\OPTfree(G,w).
\label{eq:logarithmic-profile-gap}
\end{equation}
\end{corollary}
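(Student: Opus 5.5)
The plan is to feed each tail bound into Theorem~\ref{thm:energy-profile-transfer} and evaluate the resulting integral in closed form. All other hypotheses of that theorem are inherited, so only the integral $\int_1^{n-1}\Phi(t)\,dt$ needs an upper bound.

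For the polynomial tail \eqref{eq:polynomial-tail}, I will bound the finite integral by the integral over $[1,\infty)$. Since the integrand is nonnegative, this is legitimate:
\begin{equation*}
\int_1^{n-1}\Phi(t)\,dt
\le
A\int_1^{\infty}t^{-(1+\eta)}\,dt
=
\frac{A}{\eta}.
\end{equation*}
Here $\eta>0$ guarantees convergence. Substituting into \eqref{eq:energy-profile-conclusion} gives \eqref{eq:constant-profile-gap} directly.

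For the harmonic tail $\Phi(t)\le A/t$, the integral is $A\int_1^{n-1}t^{-1}\,dt=A\log(n-1)$, with the natural logarithm. Substituting again into \eqref{eq:energy-profile-conclusion} yields \eqref{eq:logarithmic-profile-gap}. When $n=2$ the interval is degenerate, and both sides of the bound reduce to $\alpha$, which is consistent.

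There is no real obstacle here. The only point to check is that the statement of Theorem~\ref{thm:energy-profile-transfer} asks for $\Phi$ to dominate the tail only on $[1,n-1]$. Replacing $\Phi$ by the explicit majorant there is therefore harmless, and extending the range of integration to infinity in the first case only increases the bound.
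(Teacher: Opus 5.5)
Your proposal is correct and matches the paper's proof: bound $\int_1^{n-1}\Phi(t)\,dt$ by $A\int_1^\infty t^{-(1+\eta)}\,dt=A/\eta$ in the polynomial case and by $A\log(n-1)$ in the harmonic case, then substitute into Theorem~\ref{thm:energy-profile-transfer}. One wording fix: in the harmonic case, $\int_1^{n-1}\Phi(t)\,dt$ is at most $A\log(n-1)$, not equal to it.
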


\begin{IEEEproof}
Under \eqref{eq:polynomial-tail},
$\int_1^{n-1}\Phi(t)\,dt\le A\int_1^\infty t^{-(1+\eta)}\,dt=A/\eta$.
If $\Phi(t)\le A/t$, then
$\int_1^{n-1}\Phi(t)\,dt\le A\log(n-1)$.
Both conclusions follow from
Theorem~\ref{thm:energy-profile-transfer}.
\end{IEEEproof}

\begin{remark}[Scope of the energy condition]
\label{rem:energy-scope}
Theorem~\ref{thm:energy-profile-transfer} is a first-moment consequence
of the exact identity \eqref{eq:gap-expectation}, not a new metric
embedding theorem.  Its tail premise is a hierarchy-dependent diagnostic
sufficient condition.  We do not claim that it is a canonical or weakest
graph regularity assumption, and we do not derive it here from a
stochastic model.
\end{remark}

\section{Geometric Consequences and Obstructions}
\label{sec:geometric-consequences}

The energy profile is weight-sensitive.  A simpler, purely geometric
bound follows by controlling the same ratio pointwise.

\begin{corollary}[Connected-set density bound]
\label{cor:connected-set-density}
Define
\begin{equation}
\Gamma(G)
:=
\max_{\substack{S\subseteq V,\ |S|\ge2\\G[S]\text{ connected}}}
\frac{|S|-1}{\operatorname{diam}_{d_G}(S)}.
\label{eq:geometric-density}
\end{equation}
Then
\begin{equation}
\OPTcard(G,w)
\le
\Gamma(G)\,\OPTfree(G,w).
\label{eq:geometric-density-gap}
\end{equation}
\end{corollary}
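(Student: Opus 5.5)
The plan is to take a free-optimal hierarchy in connected diameter normal form and bound its cardinality cost pointwise. By Corollary~\ref{cor:connected-diameter-representation}, there is a tree $T_\star\in\mathcal T_{\mathrm{conn}}^{\mathrm{bin}}(G)$ with $F(T_\star)=\OPTfree(G,w)$, where $F$ is the diameter objective \eqref{eq:tree-free-cost}.

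Every internal cluster $S\in\mathcal I(T_\star)$ induces a connected subgraph and has $|S|\ge2$. It is therefore admissible in the maximum \eqref{eq:geometric-density}, so
\[
\rho_{T_\star}(S)=\frac{|S|-1}{\operatorname{diam}_{d_G}(S)}\le\Gamma(G).
\]
The denominator is positive because $|S|\ge2$ forces $\operatorname{diam}_{d_G}(S)\ge1$. Hence $\Gamma(G)$ is well defined, and by \eqref{eq:density-range} it is at least $1$.

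Next I would apply Lemma~\ref{lem:gap-identity}, which gives
\[
\operatorname{cost}_{D}^{-}(T_\star)=F(T_\star)\,\mathbb E_{\mu_{T_\star}}[\rho_{T_\star}].
\]
The expectation is taken with respect to a probability measure, so it is at most the pointwise maximum $\Gamma(G)$. Equivalently, one can skip the measure and use the cut-energy decomposition of Lemma~\ref{lem:cut-energy-decomposition} directly. Since $b_{T_\star}(S)\ge0$, replacing $|S|-1$ by $\Gamma(G)\operatorname{diam}_{d_G}(S)$ term by term yields
\[
\operatorname{cost}_{D}^{-}(T_\star)\le\Gamma(G)\,F(T_\star).
\]

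Finally, Corollary~\ref{cor:dasgupta-geometric-equivalence} says that $\OPTcard(G,w)$ is the minimum shifted cost over connected binary hierarchies, and $T_\star$ is one of them. Therefore
\[
\OPTcard(G,w)\le\operatorname{cost}_{D}^{-}(T_\star)\le\Gamma(G)\,\OPTfree(G,w).
\]

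There is no real obstacle here. The only point needing care is confirming that clusters of the free-optimal tree are admissible sets in the definition of $\Gamma$. This holds because the normal form guarantees connectedness, which is exactly why the connectedification of Lemma~\ref{lem:connectedification} is needed rather than an arbitrary dominating ultrametric.
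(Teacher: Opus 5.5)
Your proposal is correct and matches the paper's proof: take a free-optimal tree $T_\star$ in connected diameter normal form, bound $\rho_{T_\star}(S)\le\Gamma(G)$ cluster by cluster, and conclude via the gap identity \eqref{eq:free-optimal-profile-bound}. Your alternative of substituting term by term into the cut-energy decomposition is the same argument with the probability measure left out.
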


\begin{IEEEproof}
Every cluster of a free optimal connected hierarchy satisfies
$\rho_{T_\star}(S)\le\Gamma(G)$.  Apply
\eqref{eq:free-optimal-profile-bound}.
\end{IEEEproof}

The parameter $\Gamma(G)$ is equivalent, up to a factor of $2$, to a
linear ball-growth constant.  Define
\begin{equation}
\mathcal B(G)
:=
\max_{\substack{x\in V\\1\le r\le\operatorname{diam}_{d_G}(V)}}
\frac{|B_G(x,r)|-1}{r}.
\label{eq:ball-growth-constant}
\end{equation}
Every graph ball is connected.  A connected set $S$ of diameter $r$ is
contained in $B_G(x,r)$ for every $x\in S$, while
$\operatorname{diam}_{d_G}(B_G(x,r))\le2r$.  Consequently,
\begin{equation}
\Gamma(G)\le\mathcal B(G)\le2\Gamma(G).
\label{eq:ball-density-equivalence}
\end{equation}
In particular, the uniform growth bound $|B_G(x,r)|\le1+Lr$ implies
\begin{equation}
\OPTcard(G,w)\le L\,\OPTfree(G,w).
\label{eq:linear-growth-gap}
\end{equation}
For a path, $\Gamma(G)=1$, and the free and cardinality-constrained
optima coincide.

Table~\ref{tab:representative-instances} compares the two structural
parameters on representative unit-weight support graphs.  Statements in
the fragmentation column hold for every connected induced subgraph $S$,
except that the displayed clique value is stated for the root set
$S=V$.

\begin{table*}[t]
\caption{Representative unit-weight instances.  Here
$n_h=2^{h+1}-1$ for the complete binary tree $B_h$, with $h\ge1$.}
\label{tab:representative-instances}
\centering
\footnotesize
\begin{tabular}{|p{0.14\textwidth}|p{0.21\textwidth}|p{0.29\textwidth}|p{0.24\textwidth}|}
\hline
Support graph & Connected-set density & Fragmentation behavior & Cardinality/free ratio \\
\hline
Path $P_n$
& $\Gamma(P_n)=1$
& $\operatorname{frag}_{1/2}(S)=1$
& $\OPTcard/\OPTfree=1$ \\
\hline
Cycle $C_n$
& $\Gamma(C_n)\le2$
& $\operatorname{frag}_{1/2}(S)\le2$
& $\OPTcard/\OPTfree\le2$ \\
\hline
Complete binary tree $B_h$
& $\Gamma(B_h)\ge(n_h-1)/(2h)$
& $\operatorname{frag}_{1/3}(S)\le1$
& $\OPTcard/\OPTfree=\Theta(\log n_h)$ \\
\hline
Clique $K_n$
& $\Gamma(K_n)=n-1$
& $\operatorname{frag}_{1/2}(V)=2\lfloor n^2/4\rfloor/n=\Theta(n)$
& $\OPTcard/\OPTfree=(2n-1)/3$ \\
\hline
\end{tabular}
\end{table*}

For $C_n$, every proper connected induced subgraph is a path, while the
whole cycle has a balanced two-edge cut; this proves the fragmentation
entry, and linear ball growth proves the density entry.  Thus the cycle
is a non-tree sparse family with a constant gap.  In $B_h$, the root set
has diameter $2h$, so its density is at least $(n_h-1)/(2h)$, whereas
Theorem~\ref{thm:binary-tree-log-gap} gives a unit fragmentation profile
and a logarithmic gap.  This shows that the pointwise density bound can be
far weaker than recursive fragmentation.  Finally, a balanced cut of
$K_n$ has $\lfloor n^2/4\rfloor$ crossing edges, which gives the displayed
root fragmentation value; the same dense family realizes the sharp
linear gap in Theorem~\ref{thm:sharp-gap}.

Conditions involving only ties or the numerical aspect ratio of the
similarity weights cannot yield a constant gap.

\begin{proposition}[Distinct weights do not prevent a linear gap]
\label{prop:distinct-weight-complete}
For every $n\ge2$ and every $\delta>0$, the complete graph $K_n$
admits pairwise distinct edge weights in $[1,1+\delta]$ for which
\begin{equation}
\frac{\OPTcard(G,w)}{\OPTfree(G,w)}
\ge
\frac{2n-1}{3(1+\delta)}.
\label{eq:distinct-weight-complete-gap}
\end{equation}
\end{proposition}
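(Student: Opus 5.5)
The plan is a direct sandwich estimate on $K_n$: any weights in $[1,1+\delta]$ will do, so we simply take $m=\binom n2$ pairwise distinct values $w_e\in[1,1+\delta]$, for instance $w_e=1+\delta k_e/m$ for an enumeration $k_e\in\{1,\dots,m\}$ of the edges. Nothing about the specific choice matters beyond the bounds $1\le w_e\le 1+\delta$. For $n=2$ there is only one edge, the claimed bound reads $1/(1+\delta)\le 1$, and it is trivial; the general argument below covers it anyway.

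For the upper bound on the free optimum, observe that $d_{K_n}(x,y)=1$ for all $x\ne y$. The constant ultrametric $u\equiv 1$ off the diagonal is therefore feasible in \eqref{eq:free-optimum}, and
\[
\OPTfree(K_n,w)\le W=\sum_e w_e\le(1+\delta)\binom n2 .
\]
(In fact equality $\OPTfree=W$ holds by \eqref{eq:free-lower-total-weight}, but only the upper bound is needed.)

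For the lower bound on the cardinality optimum, let $H$ be any feasible binary hierarchy. Since $w_e\ge 1$ and $h_H\ge 0$,
\[
\sum_{xy}w_{xy}h_H(x,y)\ \ge\ \sum_{\{x,y\}}\bigl(|C_H(x,y)|-1\bigr).
\]
In $K_n$ every pair of distinct vertices is an edge, so the right-hand side is exactly $Q(H)$. By Lemma~\ref{lem:pair-cardinality-identity}, $Q(H)=\tfrac{n(n-1)(2n-1)}{6}=\tfrac{2n-1}{3}\binom n2$, regardless of the shape of $H$. Hence
\[
\OPTcard(K_n,w)\ge \frac{2n-1}{3}\binom n2 .
\]

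Dividing the two bounds gives \eqref{eq:distinct-weight-complete-gap}. No step is a real obstacle; the only point that needs care is that the pair-cardinality identity sums over all unordered pairs, and this coincides with the edge sum precisely because the support graph is complete.
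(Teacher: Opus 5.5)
Your proposal is correct and follows essentially the same argument as the paper. The constant ultrametric bounds $\OPTfree$ above by $W\le(1+\delta)\binom n2$, and weights at least $1$ together with Lemma~\ref{lem:pair-cardinality-identity} bound every hierarchy's shifted cost below by $\frac{2n-1}{3}\binom n2$. The only cosmetic difference is that the paper records $\OPTfree=W$ as an equality, whereas you rightly note that the upper bound alone suffices.
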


\begin{IEEEproof}
Choose distinct weights in $[1,1+\delta]$.  Since the support metric of
$K_n$ is the discrete metric, the constant ultrametric with value $1$
on distinct pairs is feasible for the free problem.  Together with
\eqref{eq:free-lower-total-weight}, this gives
\begin{equation}
\OPTfree(K_n,w)=W.
\label{eq:weighted-complete-free}
\end{equation}
For every binary hierarchy $T$, all edge weights are at least $1$, so
Lemma~\ref{lem:pair-cardinality-identity} implies
\begin{equation}
\operatorname{cost}_{D}^{-}(T)
\ge
\frac{n(n-1)(2n-1)}{6}.
\label{eq:weighted-complete-card-lower}
\end{equation}
On the other hand,
$W\le(1+\delta)\binom{n}{2}$.  Combining this inequality with
\eqref{eq:weighted-complete-free} and
\eqref{eq:weighted-complete-card-lower} proves
\eqref{eq:distinct-weight-complete-gap}.
\end{IEEEproof}

\section{Discussion and Limitations}
\label{sec:discussion}

Theorem~\ref{thm:sharp-gap} completely determines the worst-case cost of
cardinality realizability.  It also shows why a universal constant
comparison is impossible.  Theorem~\ref{thm:hereditary-fragmentation}
complements this negative result with a tree-independent sufficient
condition: light connected balanced cuts can be assembled recursively,
and their scale-wise normalized weights control the resulting hierarchy.
A uniform hereditary bound yields an $O(\log n)$ comparison, and
Theorem~\ref{thm:binary-tree-log-gap} shows that this conclusion is tight
even for unweighted support trees of maximum degree $3$.  Stronger decay
of the fragmentation profile is sufficient for a constant comparison.

The regularity hypothesis is local and hereditary rather than merely a
condition on the multiset of weights.  Proposition
\ref{prop:distinct-weight-complete} shows that eliminating ties and
making the global weight aspect ratio arbitrarily close to $1$ still
permits a linear gap.  By contrast,
\eqref{eq:local-balanced-edge-regularity} succeeds on bounded-degree
trees because it couples weight mass to a balanced separator inside
every connected induced subtree.  This distinction is important when
formulating probabilistic or rank-based variants of the condition.

The energy identity gives a different, hierarchy-sensitive description:
a large gap occurs when a significant fraction of the free objective is
paid at connected clusters whose cardinality is large relative to their
graph diameter.  The tail corollary is an elementary first-moment
reformulation of this distributional control.  It is included as a
diagnostic for a supplied hierarchy and is not an additional
graph-structural theorem or a claimed weakest regularity principle.

The comparison in Lemma~\ref{lem:connectedification} with the original
free ultrametric is deliberately edgewise.  For a nonedge
$xy\notin E$, connected refinement may delay the merger of $x$ and
$y$, so $u_T^{\mathrm{diam}}(x,y)\le u(x,y)$ need not hold.  This
causes no loss because the objective is supported on $E$.
Independently, $u_T^{\mathrm{diam}}\ge d_G$ holds on every pair in
$V\times V$.

Finally, throughout the paper $d_G$ is the unit-length metric of the
support graph, while $w$ supplies similarity mass only to the objective.
Consequently, changing a positive weight does not change $d_G$, whereas
removing that edge can change the metric discontinuously.  The geometric
interpretation should therefore be understood with the support graph and
its similarity weights specified as separate parts of the input.
When every pair has positive similarity, the support graph is complete
and $d_G$ is the discrete metric; in that dense regime the free benchmark
reduces to $W$ and the geometric structure is correspondingly limited.
The interpretation is most informative for sparse or deliberately
thresholded similarity graphs, including neighborhood graphs.  We do not
claim that the support metric is a canonical conversion of similarity
magnitudes into distances.

\section{Conclusion}
\label{sec:conclusion}

Dasgupta's shifted objective is the minimum weighted average stretch
into cardinality-realizable ultrametrics that dominate the support-graph
metric.  Connectedification puts this constrained problem and the free
dominating-ultrametric problem on the same hierarchy class.  Their
worst-case ratio is exactly $(2n-1)/3$, while the standard unshifted
objective has sharp factor $2(n+1)/3$.  A hereditary weighted
fragmentation profile supplies a constructive, tree-independent gap
bound: uniform local control gives $O(\log n)$, this order is tight on
bounded-degree unweighted trees, and polynomial scale decay gives a
constant.  On individual hierarchies, the gap is also exactly the
free-energy-weighted mean of
$(|S|-1)/\operatorname{diam}_{d_G}(S)$.  This identity supplies an
auxiliary diagnostic for instance-dependent bounds; the paper's
structural guarantee is the hereditary fragmentation theorem.  On
locally regular bounded-degree trees, its recursive construction runs in
$O(n\log n)$ time.
These statements compare geometric benchmarks and do not imply a new
approximation guarantee for a linkage heuristic.


\begin{thebibliography}{16}

\bibitem{Dasgupta2016}
S.~Dasgupta, ``A cost function for similarity-based hierarchical
clustering,'' in \emph{Proc. 48th Annu. ACM Symp. Theory of Computing
(STOC)}, 2016, pp.~118--127, doi: 10.1145/2897518.2897527.

\bibitem{CarlssonMemoli2010}
G.~Carlsson and F.~M\'emoli, ``Characterization, stability and
convergence of hierarchical clustering methods,'' \emph{J. Mach. Learn.
Res.}, vol.~11, no.~47, pp.~1425--1470, 2010. [Online]. Available:
\url{https://www.jmlr.org/papers/v11/carlsson10a.html}

\bibitem{RoyPokutta2017}
A.~Roy and S.~Pokutta, ``Hierarchical clustering via spreading metrics,''
\emph{J. Mach. Learn. Res.}, vol.~18, no.~88, pp.~1--35, 2017.

\bibitem{MoseleyWang2023}
B.~Moseley and J.~R. Wang, ``Approximation bounds for hierarchical
clustering: Average linkage, bisecting $k$-means, and local search,''
\emph{J. Mach. Learn. Res.}, vol.~24, pp.~1--36, 2023.  A preliminary
version appeared in \emph{Advances in Neural Information Processing
Systems 30}, 2017.

\bibitem{ABN2011}
I.~Abraham, Y.~Bartal, and O.~Neiman, ``Advances in metric embedding
theory,'' \emph{Adv. Math.}, vol.~228, no.~6, pp.~3026--3126, 2011,
doi: 10.1016/j.aim.2011.08.003.

\bibitem{ABN2015}
I.~Abraham, Y.~Bartal, and O.~Neiman, ``Embedding metrics into
ultrametrics and graphs into spanning trees with constant average
distortion,'' \emph{SIAM J. Comput.}, vol.~44, no.~1, pp.~160--192,
2015, doi: 10.1137/120884390.

\bibitem{CharikarChatziafratis2017}
M.~Charikar and V.~Chatziafratis, ``Approximate hierarchical clustering
via sparsest cut and spreading metrics,'' in \emph{Proc. 28th Annu.
ACM--SIAM Symp. Discrete Algorithms (SODA)}, 2017, pp.~841--854,
doi: 10.1137/1.9781611974782.53.

\bibitem{CohenAddadEtAl2019}
V.~Cohen-Addad, V.~Kanade, F.~Mallmann-Trenn, and C.~Mathieu,
``Hierarchical clustering: Objective functions and algorithms,''
\emph{J. ACM}, vol.~66, no.~4, Art.~26, 2019,
doi: 10.1145/3321386.

\bibitem{CharikarChatziafratisNiazadeh2019}
M.~Charikar, V.~Chatziafratis, and R.~Niazadeh, ``Hierarchical
clustering better than average-linkage,'' in \emph{Proc. 30th Annu.
ACM--SIAM Symp. Discrete Algorithms (SODA)}, 2019, pp.~2291--2304,
doi: 10.1137/1.9781611975482.139.

\bibitem{HogemoEtAl2020}
S.~H\o gemo, C.~Paul, and J.~A. Telle, ``Hierarchical clusterings of
unweighted graphs,'' in \emph{Proc. 45th Int. Symp. Mathematical
Foundations of Computer Science (MFCS)}, ser. LIPIcs, vol.~170, 2020,
pp.~47:1--47:13, doi: 10.4230/LIPIcs.MFCS.2020.47.

\bibitem{HogemoEtAl2021}
S.~H\o gemo, B.~Bergougnoux, U.~Brandes, C.~Paul, and J.~A. Telle,
``On Dasgupta's hierarchical clustering objective and its relation to
other graph parameters,'' in \emph{Proc. 23rd Int. Symp. Fundamentals
of Computation Theory (FCT)}, ser. Lecture Notes in Computer Science,
vol.~12867, 2021, pp.~287--300,
doi: 10.1007/978-3-030-86593-1\_20.

\bibitem{ArutyunovaRoglin2025}
A.~Arutyunova and H.~R\"oglin, ``The price of hierarchical
clustering,'' \emph{Algorithmica}, vol.~87, pp.~1420--1452, 2025,
doi: 10.1007/s00453-025-01327-7.

\bibitem{Bartal1996}
Y.~Bartal, ``Probabilistic approximation of metric spaces and its
algorithmic applications,'' in \emph{Proc. 37th Annu. IEEE Symp.
Foundations of Computer Science (FOCS)}, 1996, pp.~184--193,
doi: 10.1109/SFCS.1996.548477.

\bibitem{FRT2004}
J.~Fakcharoenphol, S.~Rao, and K.~Talwar, ``A tight bound on
approximating arbitrary metrics by tree metrics,'' \emph{J. Comput.
Syst. Sci.}, vol.~69, no.~3, pp.~485--497, 2004,
doi: 10.1016/j.jcss.2004.04.011.

\bibitem{CohenAddadKarthikLagarde2020}
V.~Cohen-Addad, K.~C. S., and G.~Lagarde, ``On efficient low distortion
ultrametric embedding,'' in \emph{Proc. 37th Int. Conf. Machine
Learning (ICML)}, ser. Proc. Mach. Learn. Res., vol.~119, 2020,
pp.~2078--2088. [Online]. Available:
\url{https://proceedings.mlr.press/v119/cohen-addad20a.html}

\bibitem{AilonCharikar2011}
N.~Ailon and M.~Charikar, ``Fitting tree metrics: Hierarchical
clustering and phylogeny,'' \emph{SIAM J. Comput.}, vol.~40, no.~5,
pp.~1275--1291, 2011, doi: 10.1137/100806886.

\end{thebibliography}
\end{document}